 \newtheorem{rema}{Remark}
 \newtheorem{lemm}{Lemma}
 \newtheorem{prop}{Proposition}
\newtheorem{thm}{Theorem}
\newtheorem{prob}{Problem}
\newtheorem{ass}{Assumption}
\newcommand{\diag}{\operatorname{diag}}
\newcommand{\interior}{{\operatorname{int}}}
\newcommand{\inte}[1]{\interior{ (\mcl{#1})}}
\newcommand{\minimize}{\operatorname{minimize~}} 
\newcommand{\maximize}{\operatorname{maximize~}}
\newcommand{\mbb}[1]{\mathbb #1}
\newcommand{\mbf}[1]{\mathbf #1} 
\newcommand{\mcl}[1]{\mathcal #1}
\newcommand{\argmin}[1]{\underset{#1}{argmin~}}
\newcommand{\Rnx}{\mbb{R}^{n_x}}
\newcommand{\Rny}{\mbb{R}^{n_y}}
\newcommand{\Rnu}{\mbb{R}^{n_u}}
\newcommand{\Rnz}{\mbb{R}^{\hat r}}
\DeclareMathOperator{\Lie}{L}
\newcommand{\Lfhx}[2]{\Lie_f^{#1}h_{#2}(x)}
\newcommand{\LomegaLphipsi}[3]{\Lie_{\omega_{#1}}\Lie_{\phi}^{#2}\psi_{#3}(\chi)}
\begin{document}
%
% paper title
% can use linebreaks \\ within to get better formatting as desired
\title{Nonlinear Model Predictive Control for \\
Constrained Output Path Following}
%\title{On the Convergence of Model Predictive Path-following Control}
%
%
% author names and IEEE memberships
% note positions of commas and nonbreaking spaces ( ~ ) LaTeX will not break
% a structure at a ~ so this keeps an author's name from being broken across
% two lines.
% use \thanks{} to gain access to the first footnote area
% a separate \thanks must be used for each paragraph as LaTeX2e's \thanks
% was not built to handle multiple paragraphs
%
%
%\IEEEcompsocitemizethanks is a special \thanks that produces the bulleted
% lists the Computer Society journals use for "first footnote" author
% affiliations. Use \IEEEcompsocthanksitem which works much like \item
% for each affiliation group. When not in compsoc mode,
% \IEEEcompsocitemizethanks becomes like \thanks and
% \IEEEcompsocthanksitem becomes a line break with idention. This
% facilitates dual compilation, although admittedly the differences in the
% desired content of \author between the different types of papers makes a
% one-size-fits-all approach a daunting prospect. For instance, compsoc 
% journal papers have the author affiliations above the "Manuscript
% received ..."  text while in non-compsoc journals this is reversed. Sigh.

\author{\normalsize Timm~Faulwasser  \\ \small{Laboratoire d'Automatique, Ecole Polytechnique F\'ed\'erale de Lausanne, \\ EPFL STI IGM LA
Station 9, CH-1015 Lausanne, Switzerland. \\ Fon: $+$41 21 69 37341 Fax: $+$41 21 69 32574. \quad E-mail: timm.faulwasser@epfl.ch} (correspondence address)\\~\\
\normalsize Rolf Findeisen \\ \small Institute for Automation Engineering, Otto von Guericke University Magdeburg, \\
Universit\"atsplatz 2, 39106 Magdeburg, Germany. \\
Fon: $+$49 391 67 18577 Fax: $+$49 391 67 11191. \quad E-mail: rolf.findeisen@ovgu.de. \normalsize
%        John~Doe,~\IEEEmembership{Fellow,~OSA,}
%        and~Jane~Doe,~\IEEEmembership{Life~Fellow,~IEEE}% <-this % stops a space
\IEEEcompsocitemizethanks{Main parts of this research have been conducted while TF was with the Institute for Automation Engineering, Otto-von-Guericke-University  Magdeburg, Germany. }

% note need leading \protect in front of \\ to get a newline within \thanks as
% \\ is fragile and will error, could use \hfil\break instead.
}
%\IEEEcompsocthanksitem J. Doe and J. Doe are with Anonymous University.}% <-this % stops a space
%\thanks{Manuscript received April 19, 2005; revised January 11, 2007.}}

% note the % following the last \IEEEmembership and also \thanks - 
% these prevent an unwanted space from occurring between the last author name
% and the end of the author line. i.e., if you had this:
% 
% \author{....lastname \thanks{...} \thanks{...} }
%                     ^------------^------------^----Do not want these spaces!
%
% a space would be appended to the last name and could cause every name on that
% line to be shifted left slightly. This is one of those "LaTeX things". For
% instance, "\textbf{A} \textbf{B}" will typeset as "A B" not "AB". To get
% "AB" then you have to do: "\textbf{A}\textbf{B}"
% \thanks is no different in this regard, so shield the last } of each \thanks
% that ends a line with a % and do not let a space in before the next \thanks.
% Spaces after \IEEEmembership other than the last one are OK (and needed) as
% you are supposed to have spaces between the names. For what it is worth,
% this is a minor point as most people would not even notice if the said evil
% space somehow managed to creep in.

% The paper headers
\markboth{Preprint}%
{Shell \MakeLowercase{\textit{et al.}}: Bare Demo of IEEEtran.cls for Computer Society Journals}
% The only time the second header will appear is for the odd numbered pages
% after the title page when using the twoside option.
% 
% *** Note that you probably will NOT want to include the author's ***
% *** name in the headers of peer review papers.                   ***
% You can use \ifCLASSOPTIONpeerreview for conditional compilation here if
% you desire.

% The publisher's ID mark at the bottom of the page is less important with
% Computer Society journal papers as those publications place the marks
% outside of the main text columns and, therefore, unlike regular IEEE
% journals, the available text space is not reduced by their presence.
% If you want to put a publisher's ID mark on the page you can do it like
% this:
%\IEEEpubid{0000--0000/00\$00.00~\copyright~2007 IEEE}
% or like this to get the Computer Society new two part style.
%\IEEEpubid{\makebox[\columnwidth]{\hfill 0000--0000/00/\$00.00~\copyright~2007 IEEE}%
%\hspace{\columnsep}\makebox[\columnwidth]{Published by the IEEE Computer Society\hfill}}
% Remember, if you use this you must call \IEEEpubidadjcol in the second
% column for its text to clear the IEEEpubid mark (Computer Society jorunal
% papers don't need this extra clearance.)

% use for special paper notices
%\IEEEspecialpapernotice{(Invited Paper)}

% for Computer Society papers, we must declare the abstract and index terms
% PRIOR to the title within the \IEEEcompsoctitleabstractindextext IEEEtran
% command as these need to go into the title area created by \maketitle.
\IEEEcompsoctitleabstractindextext{%
\begin{abstract}
%\boldmath
We consider the tracking of geometric paths in output spaces of nonlinear systems subject to input and state constraints without pre-specified timing requirements. 
%We consider the problem of tracking a geometric path in an output space of a nonlinear system subject to input and state constraints  whereby the path does %not imply a pre-specified timing requirement. 
Such problems are commonly referred to as constrained output path-following problems. 
Specifically, we propose a predictive control approach to constrained path-following problems with and without velocity assignments and provide sufficient convergence conditions based on terminal regions and end penalties. Furthermore, we analyze the geometric nature of constrained output path-following problems and thereby provide insight into the computation of suitable terminal control laws and terminal regions. We draw upon an example from robotics to illustrate our findings. 
\end{abstract}
% IEEEtran.cls defaults to using nonbold math in the Abstract.
% This preserves the distinction between vectors and scalars. However,
% if the journal you are submitting to favors bold math in the abstract,
% then you can use LaTeX's standard command \boldmath at the very start
% of the abstract to achieve this. Many IEEE journals frown on math
% in the abstract anyway. In particular, the Computer Society does
% not want either math or citations to appear in the abstract.

% Note that keywords are not normally used for peerreview papers.
\begin{IEEEkeywords}
path following, nonlinear model predictive control, stability, constraints, transverse normal forms
\end{IEEEkeywords}}

% make the title area
\maketitle

% To allow for easy dual compilation without having to reenter the
% abstract/keywords data, the \IEEEcompsoctitleabstractindextext text will
% not be used in maketitle, but will appear (i.e., to be "transported")
% here as \IEEEdisplaynotcompsoctitleabstractindextext when compsoc mode
% is not selected <OR> if conference mode is selected - because compsoc
% conference papers position the abstract like regular (non-compsoc)
% papers do!
\IEEEdisplaynotcompsoctitleabstractindextext
% \IEEEdisplaynotcompsoctitleabstractindextext has no effect when using
% compsoc under a non-conference mode.

% For peer review papers, you can put extra information on the cover
% page as needed:
% \ifCLASSOPTIONpeerreview
% \begin{center} \bfseries EDICS Category: 3-BBND \end{center}
% \fi
%
% For peerreview papers, this IEEEtran command inserts a page break and
% creates the second title. It will be ignored for other modes.
\IEEEpeerreviewmaketitle

\section{Introduction}
The prototypical problem in control is the stabilization of a set-point. Besides stabilization, the design of controllers for the tracking of time-varying references is also well-understood. 
Yet not all problems encountered in applications are set-point stabilization or trajectory-tracking problems. One example is the precise steering of a robotic tool along a geometric curve in the robot workspace. Typically, the highest priority is given to the minimization of  the deviation between the geometric reference path and the robot tool. The velocity to move along the reference is of secondary interest and might be adjusted in order to achieve better accuracy. 
Thus neither the stabilization of a set-point nor the tracking of a pre-defined time-varying reference is at the core of this problem. Such control problems---that require to steer a system along a geometric reference curve, whereby the speed along this reference is a degree of freedom in the controller design---are termed \textit{path-following problems} \cite{Aguiar05a,Dacic06,Do04}.  

 Besides its relevance for applications recent interest in path following is motivated by the fact that in contrast to trajectory tracking, path-following tasks of non-minimum-phase systems are not necessarily subject to fundamental limits of performance \cite{Aguiar05a, Aguiar08}. Two approaches to path following have been dominantly  discussed in the literature: geometric control design methods and Lyapunov/backstepping techniques, see \cite{Banaszuk95a, Nielsen06a, Nielsen08a}, respectively, \cite{Dacic06, Dacic07, Dacic11a, Aguiar07a, Do04, Do06, Skjetne04, Skjetne05a}. The direct consideration of constraints on inputs and/or states, however, is difficult for either approaches.

To overcome this limitation nonlinear model predictive control  (NMPC) schemes tailored to path-following problems have been proposed. The early works \cite{ifat:faulwasser09a, ifat:faulwasser09c} as well as the results presented in \cite{Yu12a} are restricted to reference paths in the state space. 
This limits the applicability, since many realistic path-following problems---e.g. movement tasks for robots, autonomous vehicles, ship, and unmanned aerial vehicles---are defined in an output space rather than in the state space. Path following in output spaces is also termed \textit{output path following}.
Successful implementations of predictive output path following to real systems have been reported in \cite{Lam13a, epfl:faulwasser13b, Boeck13a}. Predictive output path following for underwater vehicles and non-holonomic systems is discussed in \cite{Alessandretti13} and \cite{Prodan13}.\footnote{
Besides predictive (feedback) control approaches to path-following problems, optimization-based feedforward path following has been in discussed in the literature  \cite{Shin85a, Verscheure09a, ifat:faulwasser11a, Lipp14a}.
These methods assume a special system structure---usually, it is required that the path is defined in a flat output space of a differentially flat system---and they are restricted to the computation of feedforward or open-loop controls.}

Besides practical considerations, the question of stability/convergence is challenging in output path following. First steps in this direction are presented in \cite{ifat:faulwasser10a, Lam10,Alessandretti13}. While in \cite{Lam10} recursive feasibility is lost due to contraction constraints, the preliminary results in \cite{ifat:faulwasser10a, Alessandretti13} draw upon terminal regions and end penalties to guarantee path convergence. However, a common drawback of these works is that no insight into the structure of constrained output path-following problems is provided.

\textcolor{black}{In the present contribution, we  extend and generalize previous results on predictive control for output path-following problems. In contrast to \cite{ifat:faulwasser09a, ifat:faulwasser09c,ifat:faulwasser10a,Yu12a, Lam10}, we investigate two different kinds of path-following problems, i.e., \textit{with and without} velocity assignments for the reference evolution. Similar to \cite{ifat:faulwasser10a,Alessandretti13}, we present a continuous-time sampled-data NMPC framework applicable to the design of controllers for output path-following problems under direct consideration of  constraints on states and inputs. Sufficient conditions based on terminal regions and end penalties guaranteeing the convergence to an output path as well as recursive feasibility of the arising optimization problems are provided. We extend our previous results \cite{ifat:faulwasser10a} by investigating the geometric nature of  path-following problems for nonlinear systems via the analysis of transverse normal forms and their use for the computation of stabilizing terminal regions and end penalties. This way, we provide a general framework for the design of continuous-time predictive control scheme for constrained output path-following problems.}

The remainder of the paper is structured as follows. In Section \ref{sec:problem} we outline the considered output path-following problems. Section \ref{sec:MPFC} contains the main contributions, i.e., a predictive control framework to path-following problems including sufficient convergence conditions. The design of suitable stabilizing terminal regions and end penalties is discussed in Section \ref{sec:Term}. To support our results we draw upon an example from robotics in Section \ref{sec:example}.

\subsection*{Notation}
The point-wise image of a set $\mcl{X} \subset \Rnx$ under a map $h:\Rnx \to \Rny$ is denoted as 
$h(\mcl{X}):=\{y \in \Rny | x \in \mcl{X} \mapsto y = h(x)\}$. 
The interior and the boundary of a set $\mcl{X}$ are denoted as $\inte{X}$, respectively, $\partial \mcl{X}$.
An open neighborhood of a point $x \in \Rnx$ is denoted as $\mcl{N}_x$.
The $k^{th}$ time derivative of a function $r: [t_0, \infty) \to  \mathbb{R}$ is written as $\frac{d^k r(t)}{dt^k}$ or more conveniently $r^{(k)}$.
$\mcl{C}^k$ denotes the set of $k$-times continuously differentiable functions.
The set of piece-wise continuous and right continuous functions on $\mbb{R}$ that take values in
$\mcl{V}\subset \mbb{R}^m$ is shortly denoted as $\mcl{P}\mcl{C}(\mcl{V})$.
The norm $\| x \|$  of $x \in \Rnx$ denotes the 2-norm. For $Q \in \mbb{R}^{n \times n}$  $\| x \|^2_Q = x^TQx$, while $\| Q\|$ denotes the induced 2-norm and $\|x\|_\infty$ denotes the infinity norm. 
The identity matrix of $\Rnx$ is written as ${I}^{n_x}$.  We use $\tilde{{I}}^{n_x} := \left(\begin{array}{c|c}{0}^{n_x-1,1} & {I}^{n_x-1} \\ \hline 0 & {0}^{1, n_x-1}\end{array}\right)$ and ${E}^{n_x} := (0, \dots, 0, 1)^T \in \Rnx$. $A = \diag(a_1, a_2, \dots, a_{n_x})$ denotes a diagonal matrix with entries $a_1, \dots, a_{n_x}$.

The solution  of an ordinary differential equation $\dot x = f(t,x,u)$, starting at time $t_0$ at $x(t_0) = x_0$
driven by an input $u: [t_0, \infty) \to \Rnu$, is written as $x(\cdot, t_0,
x_0|u(\cdot))$. The value of this solution at time $t_1\geq t_0$ is denoted as $x(t_1, t_0,
x_0|u(\cdot))$. The total derivative of a function $E(t, x(t))\in \mcl{C}^1$ with respect to $t$ is written as $\dfrac{d}{dt}\big[E(t, x(t))\big]:= \dfrac{\partial E}{\partial t} + \dfrac{\partial E}{\partial x}\dot x(t)$. The evaluation of $E\left(t, x(t)\right)$ at $t = t_1 + T$ is written as $E\left.\left(t,  x(t) \right)\right|_{t = t_1+T}$.

\section{Path-following Problems} \label{sec:problem}
We consider nonlinear systems of the form
\begin{subequations} \label{eq:sys}
\begin{align} 
\dot x&=f(x) +\sum_{j=1}^{n_u}g_j(x)u_j, \quad x(t_0)=x_0  \label{eq:sys_dyn}\\
y & = h(x). \label{eq:sys_out}
\end{align}
\end{subequations}
 The map $h: \Rnx \to \Rny$ \eqref{eq:sys_out} defines  the output $y \in \mbb{R}^{n_y}$ or the variables of specific interest..
We assume that the maps $f: \Rnx \to \Rnx, g_j:  \Rnx \to \Rnx, h: \Rnx \to \Rny$ are sufficiently often continuously differentiable.
Here $x \in \mcl{X} \subseteq \Rnx$ and $u \in \mcl{U}\subset \Rnu$ denote the closed set of state constraints and the compact set of input constraints.  

Set-point stabilization usually refers to the task of stabilizing a fixed point in the state space. 
Trajectory tracking requires convergence of the states or the outputs of a system to a time-dependent reference that implies an explicit requirement \textit{when to be where} on the reference.%%%
\footnote{Note that in the literature different terminologies are used for trajectory-tracking problems. For instance, if the task is to track a trajectory defined in an output space and the reference
trajectory is generated by an exogenous system (or exo-system), then one refers to the problem either
as \textit{model-following problem, servo problem} or as \textit{output regulation problem}, cf.  \cite{Anderson90, Isidori95a}.
We follow along the classic lines of \cite{Athans66} and deliberately denote all these cases as trajectory-tracking problems.}
%%%
 In contrast to trajectory-tracking problems we aim at driving the system along a geometric reference without any pre-specified timing information. 
This geometric reference is denoted as path $\mathcal{P}$. We assume it is given as a parametrized regular curve in the output space \eqref{eq:sys_out}
\begin{equation} \label{eq:path}
\mathcal{P} = \left\{y \in \Rny ~|~ \theta \in [\theta_0, \theta_1] \mapsto y = p(\theta)\right\}.
\end{equation}
Here the scalar variable $\theta$ is called the path parameter and $p:\mbb{R} \to \Rny$  is called a parametrization of $\mcl{P}$.
Note that the regularity of a geometric curve implies the local bijectivity of the parametrization $p(\theta)$, 
cf. \cite{Topogonov06}. 
%The domain $[\theta_0, \theta_1]$ of the path parameter is a closed but might be unbounded, i.e., %$\theta_1 = \infty$ is possible.
The map $p:\mbb{R} \to \Rny$ is assumed to be sufficiently often continuously differentiable.  In general, the path parameter $\theta$ is time dependent but its time evolution $t \mapsto \theta(t)$ is not known a priori. 

Subsequently, path following refers to the problem of steering the output \eqref{eq:sys_out} to the path $\mcl{P}$ and to follow it along in direction
of increasing values of $\theta$. 
\textcolor{black}{
Obviously, one could solve this problem by choosing a fixed timing $\theta(t)$ and designing a trajectory-tracking controller for $p(\theta(t))$. This way, path following would be reformulated as a trajectory-tracking problem. However, the degree of freedom of adjusting $\theta(t)$ is lost. 
Here, we tackle the problem differently. The conceptual idea is to obtain the system input $u: [t_0, \infty) \to  \mcl{U}$ 
and the reference timing $t \mapsto \theta(t)$ in the controller, i.e., the controller determines the input $u(t)$ to converge to reference path as well as the time evolution $\theta(t)$ of the reference. }
In other words, we consider the following problem:
\vspace*{1mm}
\textit{
\begin{prob}[Constrained output path following] \label{prob:path} 
Given the system \eqref{eq:sys} and the reference path $\mcl{P}$ \eqref{eq:path}, design a controller  \textcolor{black}{that computes $u(t)$ and $\theta(t)$}  and achieves:
\begin{itemize}
	\item[i)] {Path convergence}: The system output $y =h(x)$ converges to the set $\mcl{P}$ \textcolor{black}{in the sense that}
	\[ \underset{t \to \infty}{\lim} \| h(x(t)) - p(\theta(t)) \| = 0.\]  		
	\item[ii)] {Convergence on path}: The system moves along $\mcl{P}$
	 in forward direction, i.e.
	\[ \dot\theta(t) \geq 0 \quad \textrm{and}\quad \underset{t \to \infty}{\lim} \| \theta(t) -\theta_1 \| = 0.\]  	
	\item[iii)] {Constraint satisfaction}:  The constraints on the states $x(t) \in \mcl{X}$ and the inputs
	 $u(t) \in \mcl{U}$ are satisfied for all times.
\end{itemize}
\end{prob}
}
Sometimes it might be desired to track a speed profile along the path. Following along the lines of \cite{Skjetne04, Aguiar05a, Aguiar08} such a problem is denoted as \textit{constrained output path following with velocity assignment}.
It differs from Problem \ref{prob:path} in part ii):
\vspace*{1mm}
\textit{
\begin{prob}[Output path following with velocity assignment] \label{prob:path_velo} 
Given the system \eqref{eq:sys} and the reference path $\mcl{P}$ \eqref{eq:path}, design a controller \textcolor{black}{that computes $u(t)$ and $\theta(t)$}, achieves part i) \& iii) of Problem \ref{prob:path} and guarantees:
\begin{itemize}
	\item[ii)] {Velocity convergence}: The path velocity $\dot\theta(t)$ converges to a predefined
	profile such that
	\[ \underset{t \to \infty}{\lim} \| \dot\theta(t) -\dot\theta_{ref}(t) \| = 0.\]  		
\end{itemize}
\end{prob}
}
\textcolor{black}{
Note that path following with velocity assignment is not equivalent to trajectory tracking, since path following with speed assignment does in general not specify a unique output reference $p(\theta(t))$. Rather it admits several reference trajectories  $p(\theta_i(t)),  i   \in\{1,2,\dots\}$, with $\dot\theta_i(t) =\dot\theta_{ref}(t)$,   which may differ with respect to $\theta$, i.e., $\theta_i(t) \neq \theta_j(t),\, i\neq j$.  }

A classical design of path-following controllers  regards the path parameter as a virtual state, whose evolution is determined through an additional ordinary differential equation (ODE) denoted as \textit{timing law}.  In essence, the timing law is an additional degree of freedom in the controller design. 
In backstepping approaches to path following, for instance, this timing law is constructed such that path convergence is enforced \cite{Do04, Skjetne04}. 
For sake of simplicity we use a simple integrator chain as timing law, i.e., the timing of the path parameter $\theta$ is specified via the ODE
\begin{equation} \label{eq:timing}
\theta^{(\hat r )}= v, \qquad \theta^{(i)}(t_0) = \theta^{(i)}_0, ~ i = 0, \dots, \hat r-1,
\end{equation} 
where, depending on the value of $\hat r$, the variable $v$ can be regarded as the speed, acceleration or jerk of the reference.
It is crucial to note that the time evolution $\theta(t)$---and thus also the evolution of the reference $p(\theta(t))$---can be controlled via the \textit{virtual} input $v: [t_0, \infty) \to \mcl{V}$.
At this point we do not specify the length $\hat r  \in \mbb{N}$ of the integrator chain \eqref{eq:timing}, which will depend on the design method and the system considered. We will come back to this issue in Section \ref{sec:Term}.

Relying on the timing law \eqref{eq:timing} we suggest to tackle path-following problems via the augmented system description

\begin{subequations}   \label{eq:sys_aug}
\begin{align}
\dot x & = f(x) +\sum_{j=1}^{n_u}g_j(x)u_j, \label{eq:sys_aug_x}\\
\dot z & = \tilde{{I}}^{\hat r}z +{E}^{\hat r}v 
\label{eq:sys_aug_z}\\
e & = h(x) - p(z_1), \label{eq:sys_aug_e}\\
 \theta & = z_1. \label{eq:sys_aug_theta}
\end{align}
\end{subequations}
Here, \eqref{eq:sys_aug_x} includes the dynamics of the system to be controlled \eqref{eq:sys}, \eqref{eq:sys_aug_z} is the timing law \eqref{eq:timing}  with $z = (\theta, \dot \theta, \dots, \theta^{(\hat r-1)})^T$.
The error output  \eqref{eq:sys_aug_e} represents the deviation from the path, while \eqref{eq:sys_aug_theta} describes the current reference position on the path.

\section{Model Predictive Path-following Control} \label{sec:MPFC}
Subsequently, we propose a predictive path-following control scheme to tackle path-following problems.  We denote this scheme as model predictive path-following control (MPFC).
We will first focus the investigations on Problem \ref{prob:path}, including the presentation of sufficient convergence conditions. The extension to 
path following with speed assignment (Problem \ref{prob:path_velo}) is discussed at the end of this section.

\subsection{Proposed Predictive Control Scheme}
As standard in predictive control the applied input is based on repeatedly solving an optimal control problem (OCP). That is, at each sampling instance $t_k = k\delta, ~ k \in \mbb{N}_0, \delta >0$ we solve an OCP that minimizes the cost functional
\begin{multline} \label{eq:OCP_J}
 J\left(x(t_k), \bar z(t_k),   \bar u_k(\cdot), \bar v_k(\cdot)\right) \\
    = \int_{t_k}^{t_k +T}
      F\left(\bar e(\tau), \bar \theta(\tau), \bar u_k(\tau),
			\bar v_k(\tau)  \right)d\tau \\%+ E\left( \bar x, \bar \theta \right)\big|_{t_k+T}.
		+ E\left.\left(t, \bar x(t), \bar z(t) \right)\right|_{t = t_k+T}.
\end{multline}
As usual in NMPC the function $F: \Rny\times\mbb{R}\times\mcl{V}\times\mcl{U} \to \mbb{R}_0^+$ is termed cost function, and $E:  \mbb{R}^+_0\times\Rnx\times\Rnz
 \to \mbb{R}_0^+$ is denoted as terminal or end penalty; 
predicted states and inputs are indicated by the superscript $\bar{\cdot}$. The subscript $\cdot_k$ indicates that an open-loop input $\bar u_k(\cdot)$ is computed at the $k^{th}$ sampling instant $t_k$. The constant $T \in(\delta, \infty)$ is called the prediction horizon. 
The OCP to be solved in a receding horizon fashion at the sampling times $t_k$ reads:
\begin{subequations} \label{eq:OCP}
\begin{equation} \label{eq:OCP_min}
 \underset{(\bar u_k(\cdot), \bar v_k(\cdot))\in \mcl{P}\mcl{C}(\mcl{U}\times\mcl{V})}{\minimize} ~ J\left(x(t_k), \bar z(t_k), \bar u_k(\cdot), \bar v_k(\cdot)\right) 
\end{equation}
subject to $ \forall \tau \in [t_k, t_k + T]:$
\begin{align}
     \dot{\bar{x}}(\tau)& = f(\bar x(\tau)) +\sum_{j=1}^{n_u}g_j(\bar x(\tau))\bar u_{k,j}(\tau)), \quad \bar x(t_k) =
x(t_k)\phantom{,z(t_{k-1}}  \label{eq:OCP_x}\\
 \quad \dot{\bar{z}}(\tau)& = \tilde{{I}}^{\hat r}z(\tau) + {E}^{\hat r}v_k(\tau) \label{eq:OCP_z} \\
 \bar{z}(t_k) &= \bar z(t_k, t_{k-1},\bar z(t_{k-1})| \bar v^\star_{k-1}(\cdot)) \label{eq:OCP_z0}\\
 \quad \bar e(\tau) &= h(\bar x(\tau)) - p(\bar z_1(\tau)) \label{eq:OCP_e}\\
 \quad \bar \theta(\tau)& = \bar z_1(\tau) \label{eq:OCP_theta} \\
 \quad \bar x(\tau) &\in \mcl{X}, ~\bar 	u_k(\tau)  \in \mcl{U}  \label{eq:OCP_x_con} \\
 \quad \bar z(\tau) &\in \mcl{Z}, ~\bar 	v_k(\tau)  \in \mcl{V} \label{eq:OCP_z_con} \\
	(\bar x(t_k &+ T), \bar z(t_k + T))^T \in \mcl{E}\subset \mcl{X}\times \mcl{Z}.\label{eq:OCP_term_con}
\end{align}
\end{subequations}
For sake of simplicity we assume that an optimal solution to  OCP \eqref{eq:OCP} exists and is attained. The statement of conditions, which ensure the existence of optimal solutions is beyond the scope of this paper, instead we refer to\cite{Lee67, Berkovitz74}.
Note that the decision variables of the minimization in \eqref{eq:OCP_min} are the real system input $u(\cdot)  \in \mcl{P}\mcl{C}(\mcl{U})$ as well as the virtual path parameter input $v(\cdot)  \in \mcl{P}\mcl{C}(\mcl{V})$. 
In other words, by solving \eqref{eq:OCP} we obtain the system input and the reference evolution at the same time.
%The augmented system \eqref{eq:sys_aug} is considered as a dynamic constraint in 
%(\ref{eq:OCP_x}--c). 
State and input constraints of the system to be controlled are enforced by \eqref{eq:OCP_x_con}. Furthermore, the path parameter dynamics \eqref{eq:OCP_z} are subject to the state and input constraints \eqref{eq:OCP_z_con}, whereby the state constraint $\mcl{Z}$
is defined as
\begin{equation}   \label{eq:Z} 
 \mcl{Z} :=  [\theta_0, \theta_1] \times \mbb{R}_0^+ \times \mbb{R}^{\hat
r -2} \subset \mbb{R}^{\hat r}.
\end{equation}
Essentially, this constraint ensures that $\bar\theta = \bar z_1 \in [\theta_0, \theta_1]$,
as well as $\dot{\bar \theta} \geq 0$. This way we enforce monotonous forward motion along the path. In order to avoid impulsive solutions of the
path parameter dynamics  \eqref{eq:OCP_z} the admissible values of the virtual path parameter
inputs $\bar v$ are restricted  to a compact set $\mcl{V} \subset \mbb{R}$
containing $0$ in its interior in \eqref{eq:OCP_z_con}.

While at each sampling instance the measured state information
$x(t_k)$ serves as initial condition for \eqref{eq:OCP_x},
the initial condition of the timing law \eqref{eq:OCP_z} is based on the last predicted trajectory
$\bar{z}(\cdot, t_{k-1},\bar{z}(t_{k-1})|\bar v^\star_{k-1}(\cdot))$ evaluated at time $t_k$. 
In cases where no initial condition for the first sampling instance $k=0$ is given, 
we obtain $\bar z(t_0)$ via 
\begin{subequations}
\begin{align} \label{eq:z0}
\bar z(t_0) &= (\theta(t_0), 0, \dots, 0)^T \\
\theta(t_0) &=  \argmin{\theta \in [\theta_0, \theta_1]} ~\| h(x_0) -
p(\theta)\|. 
\end{align}
\end{subequations}
In general, this problem might have multiple optimal solutions, and we simply choose one of them. 

Similar to classical NMPC schemes \cite{Chen98, Fontes01, Mayne00a} the terminal constraint \eqref{eq:OCP_term_con} enforces  that at the end of each optimization the predicted augmented state $(\bar x(t_k+T), \bar z(t_k+T))^T$ 
lies inside a terminal region $\mcl{E} \subseteq \mcl{X}\times\mcl{Z}$. Although only outputs and inputs are penalized in the cost function $F$ in \eqref{eq:OCP_J}, the terminal constraint is stated in the state space. 
The reason for this choice is that---under suitable assumptions---output path following can be reformulated as a manifold stabilization problem in the state space, cf. \cite{Nielsen08a, ifat:faulwasser13a_short}. We will investigate this issue in detail in Section \ref{sec:Term}.
 Also note that the terminal penalty $E$  will be used to obtain an upper bound on the cost associated to solutions originating inside the terminal region $\mcl{E} \subseteq \mcl{X}\times\mcl{Z}$. Thus $E$ is stated as a function of the augmented state  $(x,z)^T$.  Additionally, and without loss of generality, we consider explicit time dependence of $E$ in \eqref{eq:OCP_J}. 

The optimal solution of \eqref{eq:OCP} is denoted as
$J^{\star}\left(x(t_k), \bar z(t_k),  \bar u^\star_k(\cdot), \bar v^\star_k(\cdot)\right)$. It is specified by optimal input trajectories $\bar u^{\star}_k: [t_k, t_k+T] \to \mcl{U}$ and $\bar
v^{\star}_k: [t_k, t_k+T] \to \mcl{V}$. 
\textcolor{black}{
Now, we are ready to summarize the MPFC scheme in Figure \ref{fig:MPFC}. As usual in NMPC, in Step 1 we need to obtain (observed or measured) state information. In Step 2 we solve the OCP \eqref{eq:OCP}. }
\begin{figure}[t]
\begin{center}
\begin{algorithmic}
\STATE \textbf{Data}: $x(t_0), \bar z(t_0), \delta, T$ \vspace*{0.25cm}
\STATE \textbf{Step 0:} Initialize $k = 0$. 
\STATE \textbf{Step 1:} Get state information $x(t_k)$.
\STATE \textbf{Step 2:} Solve OCP \eqref{eq:OCP} with initial condition $x(t_k), \bar z(t_k)$. 
\STATE \textbf{Step 3:}  Apply optimal input \[
\forall t \in [t_k, t_k + \delta): \qquad u(t) = \bar u^{\star}_k(t).
\]
\STATE \textbf{Step 4:} Assign $\bar{z}(t_{k+1}) = \bar z(t_{k+1}, t_{k},\bar z(t_{k})| \bar v^\star_{k}(\cdot))$. 
\STATE \textbf{Step 5:} $ k \to k+1$ \textbf{Goto} Step 1. 
\end{algorithmic}
\end{center}
\caption{MPFC scheme based on OCP \eqref{eq:OCP}. \label{fig:MPFC}}
\end{figure}
\textcolor{black}{
And in Step 3, the first part of the optimal input $\bar u^{\star}_k(\cdot)$ is applied to the real system \eqref{eq:sys}  until the next sampling time. Note that the virtual input $\bar v^{\star}_k(\cdot)$ and the path parameter state $\bar z(\cdot)$ are merely internal controller variables, i.e., in Step 4 the next iteration is prepared.
\begin{rema}[Dynamic nature of the MPFC scheme]
It should be noted that the solution to \eqref{eq:OCP} at time $t_k$ depends on the solution at the previous sampling instant $t_{k-1}$. The reason is that the initial condition of $\bar z$ at time $t_k, k >0$ is based on the last predicted trajectory $\bar{z}(\cdot, t_{k-1},\bar{z}(t_{k-1})~|~\bar v_k^\star(\cdot))$ evaluated at time $t_k$, cf. \eqref{eq:OCP_z0} and Step 4 shown in Figure \ref{fig:MPFC}. In other words, the path parameter state $\bar z$ is as an internal state of the MPFC scheme. Thus, in contrast to usual NMPC schemes for set-point stabilization such as \cite{Chen98, Fontes01, Mayne00a}, the MPFC scheme as depicted in Figure \ref{fig:MPFC} is a dynamic feedback strategy.
\end{rema}
}
\textcolor{black}{
\begin{rema}[Computational demand]
We remark that the present paper is focused on the concept of predictive path following and its properties. Thus, the efficient numerical implementation of the proposed scheme is beyond its scope. However, note that \eqref{eq:OCP} is a typical OCP for an NMPC scheme with terminal constraints and terminal penalties. The only difference  compared to NMPC for set-point stabilization are the increased state and input dimensions. Thus, to solve  \eqref{eq:OCP} one may apply existing numerical tools tailored for real-time feasible NMPC with state and terminal constraints, cf. also the successful implementations of NMPC for path following in \cite{Lam13a, epfl:faulwasser13b, Boeck13a}. 
\end{rema}
}

\subsection{Sufficient Convergence Conditions} 
As is well known the receding horizon application of optimal open-loop inputs does not necessarily lead to stability nor to convergence of the closed-loop output to the path \cite{Mayne00a, Fontes01}. Thus we are interested in conditions ensuring that the MPFC scheme \eqref{eq:OCP} solves Problem \ref{prob:path}.
In order to present such conditions we rely on the following assumptions.
\textit{
\begin{ass}[System dynamics] \label{ass:sys}
 The vector fields $f:\Rnx \to\Rnx$ and $g_j: \Rnx \to \Rnx, ~ j = 1, \dots, n_u$ from \eqref{eq:sys}  are continuous and locally Lipschitz
for any pair $(x,u)^T \in \mcl{X} \times \mcl{U}$. 
\end{ass}
\vspace*{0.2cm}
\begin{ass}[Continuity of system trajectories] \label{ass:cont_sol} 
 For any $x_0 \in \mcl{X}$  and any input
function $u(\cdot)\in \mcl{P}\mcl{C}(\mcl{U})$ the system \eqref{eq:sys} has
an absolutely continuous solution.
\end{ass}
\vspace*{0.2cm}
\begin{ass}[Consistency of path and state constraints] \label{ass:p_consist} 
The path $\mcl{P}$ from \eqref{eq:path} is contained in the interior of the
point-wise image of the state constraints $\mcl{X}$ under the output map $h: \Rnx
\to \Rny$ from \eqref{eq:sys_out}, i.e.,  
$\mcl{P} \subset \interior(h(\mcl{X}))$.
\end{ass}
\vspace*{0.2cm}
\begin{ass}[Cost function] \label{ass:F} 
 The cost function $F: \Rny\times\mbb{R}\times\mcl{V}\times\mcl{U} \to \mbb{R}_0^+$ is continuous. Furthermore, we assume that $F$ is lower bounded by a class $\mcl{K}$ function, i.e., $\underline{\psi}(\|e, \theta-\theta_1 \|) \leq F(e,\theta,u,v) $.\footnote{In essence one could write $F$  more general as a function of $x,z,u$ and $v$. Here, we focus explicitly on cost functions depending on $e$ and $\theta$ to highlight that we do not consider a set-point stabilization problem but a more general (path-following) problem whereby merely outputs are penalized in the cost function.}
\end{ass}
}
\vspace*{0.2cm}
Assumptions \ref{ass:sys}-\ref{ass:cont_sol} are very similar to the ones made for NMPC for set-point stabilization problems, cf. \cite{Fontes01, Chen98}.  Basically, these assumptions are used to guarantee the local existence and uniqueness of solutions of \eqref{eq:sys}. \textcolor{black}{Assumption \ref{ass:cont_sol} is made in order to apply Barbalat's Lemma in a crucial step of the proof of Theorem \ref{thm:oMPFC}.}
 Assumptions  \ref{ass:p_consist}--\ref{ass:F} are specific for model predictive path-following control. The former  
is necessary to avoid cases for which parts of the path are inconsistent with the state constraints. 
The latter assumption 
requires that the cost function is lower bounded in terms of the path-following error and the path parameter. This way, we enforce path convergence as well as convergence on the path.
Under the above assumptions the following result is obtained. 
\textit{
\begin{thm}[Convergence of MPFC] \label{thm:oMPFC} 
Consider Problem \ref{prob:path} and suppose that Assumptions 
\ref{ass:sys}--\ref{ass:F} hold.
Suppose that a terminal region
		$\mcl{E} \subset \mcl{X}\times\mcl{Z}$ and a terminal penalty 
		$E(t, x, z)$ exist such that the following conditions are satisfied:
\begin{itemize}
\item [i)]	The set $\mcl{E}$ is compact. $E(t, x, z)$  is  $\mcl{C}^1$ and positive semi-definite  with respect to $(t,x,u)$.					 
\item[ii)]	For all $t \in [t_0, \infty)$ and all $(\tilde x, \tilde z)^T \in \mcl{E}$ there exists a scalar $\epsilon \geq \delta >0$
 and admissible inputs $(u_{\mcl{E}}(\cdot),v_{\mcl{E}}(\cdot)) \in \mcl{P}\mcl{C}(\mcl{U}\times\mcl{V})$ 
		such that for all $\tau \in [t, t+\delta]$
\begin{equation}
	\dfrac{d}{d\tau}\big[E(\tau, x(\tau),z(\tau))\big]
		   + F\big( e(\tau), \theta(\tau), u_{\mcl{E}}(\tau),v_{\mcl{E}}(\tau)\big) \leq  0,
		   \label{eq:ass2}
%		\end{multline}
\end{equation}
	    and the solutions $x(\tau) = x(\tau,t,\tilde x|u_{\mcl{E}}(\cdot))$ and 
	    $z(\tau)= z(\tau,t,\tilde z|v_{\mcl{E}}(\cdot))$, starting at 
	    $(\tilde x, \tilde z)^T \in \mcl{E}$, stay in $\mcl{E}$ for all $\tau \in [t, t+\delta]$.
\item[iii)] The OCP \eqref{eq:OCP} is feasible for $t_0$.	
	\end{itemize}
	Then the MPFC scheme depicted in Figure \ref{fig:MPFC} solves Problem \ref{prob:path}. 
\end{thm}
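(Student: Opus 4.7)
The plan is to follow the classical Mayne--Chen--Allg\"ower style argument for NMPC with terminal regions and end penalties, adapted to the augmented system \eqref{eq:sys_aug} and the dynamic nature of the MPFC scheme. Three ingredients are needed: recursive feasibility, monotone decrease of the optimal value function $J^\star$, and a Barbalat-type limit argument that turns integrability of $F$ along the closed loop into the two convergence statements in Problem \ref{prob:path}.

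First I would establish recursive feasibility. Assume OCP \eqref{eq:OCP} is feasible at $t_k$ with optimizers $(\bar u^\star_k,\bar v^\star_k)$ driving the augmented state into $\mcl{E}$ at $t_k+T$. At $t_{k+1}=t_k+\delta$, construct the candidate pair by concatenation: on $[t_{k+1},t_k+T]$ take the tail of $(\bar u^\star_k,\bar v^\star_k)$, and on $[t_k+T,t_{k+1}+T]$ take the terminal inputs $(u_{\mcl{E}},v_{\mcl{E}})$ provided by condition ii). The terminal inputs are admissible by construction, keep the augmented trajectory inside $\mcl{E}\subseteq\mcl{X}\times\mcl{Z}$ for at least $\delta$ units of time, and therefore satisfy \eqref{eq:OCP_x_con}--\eqref{eq:OCP_term_con} for the prolonged horizon. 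The initial condition \eqref{eq:OCP_z0} is consistent with Step~4 of the algorithm, so the candidate is feasible at $t_{k+1}$; induction plus assumption iii) gives feasibility for all $k\geq 0$, and hence state/input constraints are satisfied along the closed loop, establishing part iii) of Problem \ref{prob:path}.

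Next I would derive the key value-function inequality. Evaluating $J$ on the candidate at $t_{k+1}$ and subtracting $J^\star$ at $t_k$ gives
\begin{multline*}
J(x(t_{k+1}),\bar z(t_{k+1}),\cdot)-J^\star(x(t_k),\bar z(t_k),\cdot) \\
= -\!\!\int_{t_k}^{t_{k+1}}\!\!F\,d\tau + \!\!\int_{t_k+T}^{t_{k+1}+T}\!\!F\,d\tau \\
+ E\bigl(t,x(t),z(t)\bigr)\big|_{t=t_{k+1}+T} - E\bigl(t,x(t),z(t)\bigr)\big|_{t=t_k+T}.
\end{multline*}
The last three terms equal $\int_{t_k+T}^{t_{k+1}+T}\bigl(\tfrac{d}{d\tau}E+F\bigr)\,d\tau$, which is non-positive by condition ii). Optimality of $J^\star$ at $t_{k+1}$ then yields
\[
J^\star(t_{k+1})-J^\star(t_k)\leq -\!\!\int_{t_k}^{t_{k+1}}\!F\bigl(e(\tau),\theta(\tau),u(\tau),v(\tau)\bigr)d\tau.
\]
Because $F\geq 0$, $J^\star$ is non-increasing along the closed loop, bounded below by zero, and therefore convergent. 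Telescoping gives $\int_{t_0}^{\infty}F\,d\tau<\infty$.

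Finally I would apply Barbalat's Lemma to obtain parts i) and ii). Assumption \ref{ass:cont_sol} yields absolute continuity of $x(\cdot)$ and $z(\cdot)$; together with compactness of $\mcl{U}\times\mcl{V}$, boundedness of the state on the compact terminal region plus the recursive feasibility argument (states remain in a bounded sublevel set of $J^\star$), and continuity of $F$, $h$, $p$, the composite map $t\mapsto F(e(t),\theta(t),u(t),v(t))$ is uniformly continuous. Barbalat's Lemma forces $F\to 0$, and Assumption \ref{ass:F} (lower bound by a class-$\mcl{K}$ function of $\|e,\theta-\theta_1\|$) then implies $\|h(x(t))-p(\theta(t))\|\to 0$ and $\theta(t)\to\theta_1$. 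The constraint $\bar z\in\mcl{Z}$ gives $\dot\theta(t)\geq 0$, closing out part ii).

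The delicate step, which I expect to be the main obstacle, is the uniform-continuity verification needed to invoke Barbalat, since the closed-loop state is not a priori bounded outside the terminal region: one must argue that the decrease of $J^\star$ combined with the lower bound $\underline{\psi}$ and Assumption \ref{ass:p_consist} confines the trajectory to a compact set so that $f$, $g_j$, $h$, $p$ and the inputs produce a uniformly bounded and hence uniformly continuous integrand.
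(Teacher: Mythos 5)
Your overall architecture is the same as the paper's: recursive feasibility by concatenating the tail of the previous optimizer with the terminal inputs from condition ii), a value-function decrease inequality obtained by integrating \eqref{eq:ass2} over $[t_k+T,t_{k+1}+T]$, and a Barbalat-type argument to pass from integrability to convergence. (The paper additionally introduces the inter-sample value function $V^\delta$ in the style of Fontes; your telescoping at sampling instants already yields $\int_{t_0}^\infty F\,d\tau<\infty$, so that difference is cosmetic.) Constraint satisfaction and forward motion are handled as in the paper, via $\mcl{E}\subset\mcl{X}\times\mcl{Z}$ and the definition of $\mcl{Z}$.

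The genuine gap is in the final step: you apply Barbalat's Lemma to $t\mapsto F\bigl(e(t),\theta(t),u(t),v(t)\bigr)$ and assert this map is uniformly continuous. It is not, in general: the closed-loop inputs are only piecewise continuous (jumps at sampling instants and possibly inside each open-loop segment, since $\bar u_k^\star,\bar v_k^\star\in\mcl{PC}$), so $F$ evaluated along the closed loop inherits these discontinuities; moreover, ``uniformly bounded hence uniformly continuous'' is not a valid implication. The paper avoids this by first using Assumption \ref{ass:F} \emph{inside} the integral inequality, i.e., $F\geq\underline{\psi}(\|e,\theta-\theta_1\|)$, and then applying Barbalat (in the Michalska--Mayne form, invoking Assumption \ref{ass:cont_sol} for absolute continuity of solutions) to $t\mapsto\underline{\psi}(\|e(t),\theta(t)-\theta_1\|)$, which depends only on the continuous state trajectory $x(t),z(t)$ through $h$, $p$ and $\theta$, and whose boundedness issues are settled by the compact terminal set together with the constraints. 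This yields $\lim_{t\to\infty}\|e(t),\theta(t)-\theta_1\|=0$ directly, which is what parts i) and ii) of Problem \ref{prob:path} require; your conclusion is recovered once you reroute the Barbalat step through $\underline{\psi}$ rather than through $F$ itself. The obstacle you flag (confining the trajectory to a compact set) is real but secondary, and is exactly what the compact terminal constraint plus recursive feasibility deliver.
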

}
\begin{proof}
In essence the proof of this result can be obtained via a reformulation of the standard results on convergence of continuous time NMPC for set-point stabilization, see e.g. \cite{Fontes01, Chen98, Mayne00a}. Thus we provide only a shortened proof here outlining the main differences to  \cite{Fontes01}.

{Step 1} (Recursive feasibility):  In the first step recursive feasibility is shown via the usual concatenation of optimal inputs $(u^\star_k(\cdot), v^\star_k(\cdot))$ with the terminal controls $(u_{\mcl{E}}(\cdot),v_{\mcl{E}}(\cdot))$ as in \cite{Fontes01}. Since these concatenated inputs ensure positive invariance of the terminal constraint $\mcl{E}$ it immediately follows that the MPFC scheme based on \eqref{eq:OCP} is recursively feasible. 

{Step 2} (Constraint satisfaction and forward motion):  In the second step we verify that ii)-iii) of Problem \ref{prob:path} are satisfied. Recall that the terminal constraint set is contained in the state constraints, i.e.,  $\mcl{E} \subset \mcl{X}\times\mcl{Z}$. Thus part iii) of Problem \ref{prob:path} is satisfied. Furthermore, for all $(x,z)^T\in\mcl{E}$ we have $z \in \mcl{Z}$ from \eqref{eq:Z}, which implies that also the forward motion requirement $\dot \theta = z_2 \geq 0$ holds.  Hence part ii) of Problem \ref{prob:path} is ensured. 

{Step 3} (Path convergence and convergence on path): It remains to verify that path convergence (part i) of Problem \ref{prob:path}) is guaranteed. This is done in the third step. 
First, we consider the value function of OCP \eqref{eq:OCP}
\[
V(t_k, x(t_k), \bar z(t_k)) := J\left(x(t_k),  \bar z(t_k), u_k^\star(\cdot),  v_k^\star(\cdot)\right).
\]
Similar to \cite[Lemma 5]{Fontes01} one uses the invariance condition \eqref{eq:ass2} to show that for all sampling times $\delta \in (0, \epsilon]$ we have
\begin{multline*}
V(t_{k+1}, x(t_{k+1}), \bar z(t_{k+1})) - V(t_k, x(t_k), \bar z(t_k)) \\ \leq - \int_{t_k}^{t_{k+1}} \underline{\psi}(\|e(t), \theta(t)-\theta_1\|)dt.
\end{multline*}

Second, we consider the MPC value function
\begin{multline*} \label{eq:def_Vd}
V^\delta(t, x(t), \bar z(t)) := \\ V(t_k, x(t_k), \bar z(t_k)) -\int_{t_k}^t F( e(\tau), \theta(\tau), u_k^\star(\tau),  v_k^\star(\tau)) d\tau,
\end{multline*}
 which is the remainder of $V(t_k, x(t_k), \bar z(t_k))$  for $x(t) = x(t, t_k, x(t_k) |u^\star_k(\cdot))$ and $\bar z(t) = \bar z(t, t_k, \bar z(t_k) |v^\star_k(\cdot))$. In
the definition of $V^\delta(t, x(t),\bar z(t))$ the time instant is $t_k = k\delta$ with 
 $k =  \underset{k\in \mbb{N}}{\max}\{k ~|~ t_k \leq t\}$, i.e., the closest previous sampling instant.
One can apply the same ratio as in \cite[Lemma 6]{Fontes01} to show that for all $t \geq t_0$ it holds that
\begin{multline*}
 V^\delta(t, x(t), \bar z(t)) + \int_{t_0}^{t} \underline{\psi}(\| e(\tau), \theta(\tau)-\theta_1\|) d\tau \\ \leq  V^\delta(t_0, x(t_0), \bar z(t_0)).
\end{multline*}
Finally, we use Assumption \ref{ass:cont_sol} and apply Barbalat's Lemma \cite[Lemma 4]{Michalska94} to establish convergence $\lim\limits_{t \to \infty}\|e(t), \theta(t)-\theta_1\| =0$.
This finishes the proof. 
\end{proof}
\vspace*{0.2cm}
Note that the proposed control scheme aims on convergence
of the output $y=h(x)$ to the path and not on \textit{Lyapunov}-like state
stability.\footnote{Even for sampled-data continuous-time NMPC tailored to set-point stabilization it is in general difficult to prove Lyapunov stability. Usually, merely asymptotic convergence is established  \cite{Fontes01}. This is due to the fact that between two sampling instances $t_k$ and $t_{k+1}$ the controller applies open-loop inputs to the system.}  In other words, Theorem \ref{thm:oMPFC} allows
cases where the output converges to the path while the states might move through
$\mcl{X}\times\mcl{Z}$.  This  means that general cases, in which the internal dynamics of  \eqref{eq:sys_aug} 
with respect to the output $(e, \theta)^T$ are merely bounded in $\mcl{X}\times\mcl{Z}$ but not asymptotically convergent, are possible. At the end of each finite prediction horizon, however, the predicted states have to reach the terminal constraint $\mcl{E} \subset \mcl{X}\times\mcl{Z}$. This 
 implies that in the nominal case without plant-model mismatch all states of \eqref{eq:sys_aug}---which includes the states of the zero dynamics of \eqref{eq:sys_aug} with respect to the output $(e,\theta)^T$---are bounded. Thus the fact that we merely penalize outputs in the cost function $F$ does not lead to further difficulties.
It is also straightforward to see that Theorem \ref{thm:oMPFC} holds for the special, and usually hardly application relevant, case of invertible output maps $h: \Rnx \to \Rnx$ and for paths directly defined in the state space.

\begin{rema}[Non-input-affine systems] \label{rem:fxu}
We point out that the proof of Theorem \ref{thm:oMPFC} does not rely on the specific input-affine structure of the system \eqref{eq:sys}. 
Indeed, the conditions of the theorem hold even for cases of non-input-affine systems, i.e., general systems of the form $\dot x = f(x,u), ~y =h(x)$.
The main reason to consider input-affine systems is that this choice allows further insight into the geometric nature of path-following problems. And, as we will show subsequently, this restriction of the considered system class simplifies the computation of end penalties and terminal regions satisfying the conditions of Theorem \ref{thm:oMPFC}.
\end{rema}

\subsection{Extension to Predictive Path Following with Velocity Assignment}
At this point it is fair to ask how the result of Theorem \ref{thm:oMPFC} can be extended to velocity-assigned path following as described in Problem \ref{prob:path_velo}.
To this end we modify Assumption \ref{ass:F} as follows:
\textit{
\begin{ass}[Cost function] \label{ass:F_velo} 
 The cost function $F: \Rny\times\mbb{R}\times\mcl{V}\times\mcl{U} \to \mbb{R}_0^+$ is continuous. Furthermore, we assume that $F$ is lower bounded by a class $\mcl{K}$ function, i.e., $\underline{\psi}(\|e, \dot\theta-\dot\theta_{ref} \|) \leq F(e,\dot\theta,u,v) $.
\end{ass}
}
For velocity-assigned path-following problems the path parameter $\theta = z_1$ might grow unbounded. Thus the constraint $z \in \mcl{Z}$, cf. \eqref{eq:OCP_z_con},  on the path parameter states should be dropped. 
The next result states that a modified MPFC scheme, in which a cost function according to Assumption \ref{ass:F_velo} and no path parameter state constraint ($\mcl{Z} = \mbb{R}^{\hat r}$) are considered, solves Problem \ref{prob:path_velo}.
\textit{
\begin{thm}[Convergence of MPFC with velocity assignment] \label{thm:oMPFC_velo} 
Consider Problem \ref{prob:path_velo} and suppose that Assumptions 
\ref{ass:sys}--\ref{ass:p_consist} and \ref{ass:F_velo} hold.
Suppose that $\mcl{Z} = \mbb{R}^{\hat r}$,  a terminal region
		$\mcl{E} \subset \mcl{X}\times \mbb{R}^{\hat r}$ and a terminal penalty 
		$E(t, x, z)$ exist such that the following conditions are satisfied:
\begin{itemize}
\item [i)]	The set $\mcl{E}$ is compact with respect to $x$ and closed with respect to $z$. $E(t, x, z)$  is  $\mcl{C}^1$ with respect to $(t,x,u)$ and positive semi-definite.					 
\item[ii)]	For all $t \in [t_0, \infty)$ and all $(\tilde x, \tilde z)^T \in \mcl{E}$ there exist a scalar $\epsilon \geq \delta >0$
 and admissible inputs $(u_{\mcl{E}}(\cdot),v_{\mcl{E}}(\cdot)) \in \mcl{P}\mcl{C}(\mcl{U}\times\mcl{V})$ 
		such that for all $\tau \in [t, t+\delta]$
\begin{equation}
	\dfrac{d}{d\tau}\big[E(\tau, x(\tau),z(\tau))\big]
		   + F\big( e(\tau), \dot\theta(\tau), u_{\mcl{E}}(\tau),v_{\mcl{E}}(\tau)\big) \leq  0,
		   \label{eq:ass3}
%		\end{multline}
\end{equation}
	    and the solutions $x(\tau) = x(\tau,t, \tilde x|u_{\mcl{E}}(\cdot))$ and 
	    $z(\tau)= z(\tau,t,\tilde z|v_{\mcl{E}}(\cdot))$, starting at 
	    $(\tilde x, \tilde z)^T \in \mcl{E}$, stay in $\mcl{E}$ for all $\tau \in [t, t+\delta]$.
\item[iii)] The OCP \eqref{eq:OCP} is feasible for $t_0$.	
	\end{itemize}
	Then the MPFC scheme depicted in Figure \ref{fig:MPFC} solves Problem \ref{prob:path_velo}. 
\end{thm}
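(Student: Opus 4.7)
The plan is to imitate the three-step structure used in the proof of Theorem \ref{thm:oMPFC} and to adjust only at the points where dropping the path-parameter state constraint and altering the lower bound on $F$ make a difference. First, for recursive feasibility I would concatenate, at every sampling instant $t_k$, the tail of the previous optimal pair $(\bar u_{k-1}^\star, \bar v_{k-1}^\star)$ restricted to $[t_k, t_{k-1}+T]$ with the terminal controls $(u_{\mcl{E}}, v_{\mcl{E}})$ from condition ii) on $[t_{k-1}+T, t_k+T]$. Positive invariance of $\mcl{E}$ under $(u_{\mcl{E}}, v_{\mcl{E}})$ over at least one sampling period $\delta \leq \epsilon$ guarantees that the candidate trajectory satisfies the terminal constraint \eqref{eq:OCP_term_con} and all intermediate constraints of OCP \eqref{eq:OCP} at time $t_k$. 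Constraint satisfaction (part iii of Problem \ref{prob:path}) is then immediate because $\mcl{E} \subset \mcl{X} \times \mbb{R}^{\hat r}$ and recursive feasibility forces the applied inputs to respect $\mcl{U}$ and the states to remain in $\mcl{X}$. Note that the forward-motion step from the proof of Theorem \ref{thm:oMPFC} has no counterpart here, as Problem \ref{prob:path_velo} does not require monotonicity of $\theta$.

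The convergence step is where the velocity assignment enters. Using the infinitesimal decrease condition \eqref{eq:ass3} I would adapt \cite[Lemmas 5--6]{Fontes01} exactly as in the proof of Theorem \ref{thm:oMPFC} to show that both the value function $V$ and the MPC value function $V^\delta$ are monotone non-increasing along closed-loop trajectories, which yields
\begin{equation*}
V^\delta(t, x(t), \bar z(t)) + \int_{t_0}^{t} \underline{\psi}\bigl(\|e(\tau),\,\dot\theta(\tau) - \dot\theta_{ref}(\tau)\|\bigr)\,d\tau \leq V^\delta(t_0, x(t_0), \bar z(t_0)).
\end{equation*}
Consequently the nonnegative integrand belongs to $L^1([t_0, \infty))$. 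Barbalat's Lemma \cite[Lemma 4]{Michalska94} applied to the map $t \mapsto \underline{\psi}(\|e(t), \dot\theta(t) - \dot\theta_{ref}(t)\|)$ then forces its limit to be zero, which in view of $\underline{\psi} \in \mcl{K}$ gives both parts i) and ii) of Problem \ref{prob:path_velo} simultaneously.

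The main obstacle I anticipate is that $\mcl{E}$ is only compact in $x$ and merely closed in $z$, so $\bar z(t_k)$ may grow unbounded along the closed loop and the standard compactness arguments of Theorem \ref{thm:oMPFC} do not transfer verbatim. Two checks are needed. First, the concatenation argument must not rely on a uniform bound on $z$ over $\mcl{E}$; this is fine because \eqref{eq:ass3} only requires the pointwise decrease of $E(\cdot, x(\cdot), z(\cdot))$ along admissible terminal trajectories, not a global bound on $E$. Second, applying Barbalat's Lemma requires uniform continuity of the integrand. Absolute continuity of $x(\cdot)$ follows from Assumption \ref{ass:cont_sol}, and because $v \in \mcl{V}$ with $\mcl{V}$ compact the right-hand side of \eqref{eq:sys_aug_z} is bounded, so $z(\cdot)$ is Lipschitz in time and in particular $\dot\theta(\cdot)$ is absolutely continuous. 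Composing with the continuous maps $h$, $p$ and $\underline{\psi}$, and assuming $\dot\theta_{ref}$ is uniformly continuous (as is implicitly expected for a sensible velocity profile), yields a uniformly continuous integrand and closes the argument.
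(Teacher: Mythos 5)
Your proposal is correct and follows essentially the same route as the paper, which itself only sketches this proof as "identical to Theorem \ref{thm:oMPFC} except that Barbalat's Lemma now yields $\lim_{t\to\infty}\|e(t),\dot\theta(t)-\dot\theta_{ref}(t)\|=0$": you reproduce the recursive-feasibility concatenation, drop the forward-motion step (correctly, since Problem \ref{prob:path_velo} only inherits parts i) and iii) of Problem \ref{prob:path}), and adapt the value-function/Barbalat argument to the modified lower bound of Assumption \ref{ass:F_velo}. Your additional remarks on the loss of compactness in $z$ and on uniform continuity of the integrand (requiring a mild regularity assumption on $\dot\theta_{ref}$) go beyond the paper's sketch but are consistent with it and do not change the argument.
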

}
The proof of this result is similar to the proof of Theorem \ref{thm:oMPFC}. Recursive feasibility and constraint satisfaction can be shown along the same lines. The only difference is that in the last step the application of Barbalat's Lemma leads to the conclusion that $\lim\limits_{t \to \infty}\|e(t), \dot\theta(t)-\dot\theta_{ref}(t)\| =0$.

\section{Design of Suitable Terminal Regions \\and End Penalties} \label{sec:Term}
So far we have shown that a suitable combination of a terminal region and an end penalty can be used to guarantee convergence of predictive path following. 
Similar to the case of NMPC for stabilization and tracking problems, \cite{Chen98, Michalska93,ifat:faulwasser11b}, the design of terminal regions and corresponding end penalties is challenging for the proposed MPFC scheme. In general, the computation of terminal regions involves the design of a locally admissible controller. 
Subsequently, we present two technical results that allow using trivial end penalties $E(t,x(t),z(t)) = 0$. And later we discuss the inherent geometric properties of path-following problems. 

\subsection{Trivial End Penalties}
As a preparation step we introduce the notation 
\begin{equation*}
\varphi_\mcl{E}(t)  
:= F( e(t),  \theta(t), u_{\mcl{E}}(t), v_{\mcl{E}}(t))
\end{equation*}
describing the evolution of the cost function for given inputs $(u_{\mcl{E}}(\cdot), v_{\mcl{E}}(\cdot)) \in \mcl{P}\mcl{C}(\mcl{U}\times\mcl{V})$.
\textit{
\begin{lemm}[Existence of a time-dependent terminal penalty] \label{lem:ExistE} 
\textit{
Assume that there exist terminal controls $(u_{\mcl{E}}(\cdot), v_{\mcl{E}}(\cdot)) \in \mcl{P}\mcl{C}(\mcl{U}\times\mcl{V})$ defined for all $t \in [t_0, \infty)$ and a compact terminal region $\mcl{E} \subset \mcl{X} \times {Z}$ such that the following conditions hold:
\begin{itemize}
\item[i)] The set $\mcl{E} \subset \mcl{X} \times {Z}$ is rendered controlled positively invariant by $(u_{\mcl{E}}(\cdot), v_{\mcl{E}}(\cdot))$, i.e., any solution  $x(t) = x(t, t_0, \tilde x\,|\,u_{\mcl{E}}(\cdot))$ and 
	    $z(t)= z(t, t_0, \tilde z \,|\,v_{\mcl{E}}(\cdot))$, starting at time $t_0$ at any $(\tilde x, \tilde z)^T \in \mcl{E}$, stays in
	     $\mcl{E}$ for all $t \in [t_0, \infty)$.
\item[ii)] For all $(\tilde x, \tilde z)^T \in \mcl{E}$ and all $t \geq t_0$ it holds that
\begin{equation}
\varphi_\mcl{E}(t) \leq c(\tilde x, \tilde z) e^{-\alpha(\tilde x, \tilde z) (t-t_0)}
\end{equation}
and  for all $(\tilde x, \tilde z)^T \in \mcl{E}:$
\[ 0<\underline \alpha \leq\alpha(\tilde x, \tilde z) \textrm{ and } 0 \leq c(\tilde x, \tilde z)  \leq \bar c < \infty.\]
\end{itemize}
Then, 
there exists an end penalty $\tilde E: [t_0,\,\infty) \to \mbb{R}^+\backslash\infty$, such that $\tilde E$ and $\mcl{E}$ satisfy the conditions of Theorem \ref{thm:oMPFC}.
}
\end{lemm}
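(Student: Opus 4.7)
The plan is to take the natural ``cost-to-go'' candidate for the terminal penalty, namely
\[
\tilde E(t, x, z) := \int_t^{\infty} F\bigl(e(\tau), \theta(\tau), u_{\mcl{E}}(\tau), v_{\mcl{E}}(\tau)\bigr)\, d\tau,
\]
where $(e(\tau), \theta(\tau))$ are evaluated along the terminal trajectories $x(\tau, t, x\,|\,u_{\mcl{E}}(\cdot))$ and $z(\tau, t, z\,|\,v_{\mcl{E}}(\cdot))$ emanating from $(x,z)$ at time $t$ under the terminal controls. Intuitively, $\tilde E$ collects the entire future stage cost along the admissible extension into the terminal region, so the infinitesimal decrease condition \eqref{eq:ass2} must reduce to an equality along the very inputs used to define it.

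First I would verify that $\tilde E$ is well defined, finite, and positive semi-definite on $\mcl{E}$. Using hypothesis (ii) of the lemma together with the uniform bounds $c(\tilde x, \tilde z)\le \bar c$ and $\alpha(\tilde x, \tilde z)\ge \underline\alpha > 0$, the integrand is dominated by $\bar c\, e^{-\underline\alpha (\tau - t)}$, so
\[
0 \le \tilde E(t, x, z) \le \bar c / \underline\alpha < \infty
\]
uniformly on $\mcl{E}$ and for all $t \ge t_0$. Nonnegativity is immediate from $F \ge 0$ (Assumption~\ref{ass:F}).

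Second I would check the $\mcl{C}^1$ regularity required by Theorem~\ref{thm:oMPFC}. Continuity and continuous differentiability of the cost functional in $t$ follow from continuity of $F$ together with continuous dependence of the trajectories on their initial data (which in turn follows from the local Lipschitz hypotheses of Assumptions~\ref{ass:sys}--\ref{ass:cont_sol}). The exponential majorant from (ii) provides the integrable dominating function needed to justify differentiating under the integral sign when propagating along the flow.

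Third, the decrease inequality \eqref{eq:ass2} is verified directly. Pick any $(\tilde x, \tilde z)^T \in \mcl{E}$ and any $t \ge t_0$, and evaluate $\tilde E$ along the corresponding terminal trajectory: for $\tau \in [t, t+\delta]$,
\[
\tilde E\bigl(\tau, x(\tau), z(\tau)\bigr) = \int_\tau^{\infty} F\bigl(e(s), \theta(s), u_{\mcl{E}}(s), v_{\mcl{E}}(s)\bigr)\,ds,
\]
so that by the fundamental theorem of calculus
\[
\tfrac{d}{d\tau}\bigl[\tilde E(\tau, x(\tau), z(\tau))\bigr] = -F\bigl(e(\tau), \theta(\tau), u_{\mcl{E}}(\tau), v_{\mcl{E}}(\tau)\bigr),
\]
and \eqref{eq:ass2} holds with equality. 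The invariance requirement in Theorem~\ref{thm:oMPFC}~(ii) is hypothesis (i) of the lemma restricted to $[t, t+\delta]$, so any admissible $\delta>0$ works (one may take $\epsilon$ arbitrarily large). Combined with boundedness of $\tilde E$ and nonnegativity, this means $\tilde E$ and $\mcl{E}$ together fulfill all conditions of Theorem~\ref{thm:oMPFC}.

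The hardest part will be the regularity step: strictly speaking, $\mcl{C}^1$ dependence of $\tilde E$ on $(x,z)$ requires differentiable dependence of ODE flows on initial conditions (a mild strengthening of Assumption~\ref{ass:sys}), and justifying differentiation under an improper integral. However, what is really used in Theorem~\ref{thm:oMPFC} is only that $\tfrac{d}{d\tau}[\tilde E(\tau, x(\tau), z(\tau))]$ exists along admissible trajectories, which follows from continuity of the integrand alone -- so this technicality can be absorbed either by strengthening smoothness of $f, g_j, h, p$ (already standing in the paper) or by noting that the chain-rule identity in \eqref{eq:ass2} is only needed along the specific terminal trajectory, where the fundamental theorem of calculus suffices.
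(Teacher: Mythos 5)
Your cost-to-go construction captures the right intuition, but as written it does not prove the lemma as stated, and it diverges from the paper's (much shorter) argument in a way that matters later. The lemma asserts the existence of a terminal penalty $\tilde E:[t_0,\infty)\to\mbb{R}^+\backslash\infty$, i.e., a function of time \emph{only}. Your candidate $\tilde E(t,x,z)=\int_t^\infty F\,d\tau$ is state-dependent, so it is not an object of the claimed type; more importantly, the time-only structure is exactly what is exploited afterwards in Lemma \ref{lem:OCP_equivalence} and Proposition \ref{prop:E(t)}: because $\tilde E(t_k+T)$ is a constant independent of the decision variables, the two OCPs differ by a constant and the trivial penalty $E\equiv 0$ inherits the guarantees. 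With a state-dependent cost-to-go that downstream equivalence argument collapses, so your proof establishes (at best) a different, weaker statement. The paper instead majorizes the stage cost uniformly over $\mcl{E}$, $\varphi_\mcl{E}(t)\le \bar c\, e^{-\underline\alpha (t-t_0)}$, and takes the explicit time-only penalty $\tilde E(t)=\bar c\,\underline\alpha^{-1}e^{-\underline\alpha(t-t_0)}$ of \eqref{eq:tildeE}, whose derivative $-\bar c\, e^{-\underline\alpha(t-t_0)}$ immediately yields \eqref{eq:ass2}; this one-line step is the missing move from your dominating-function estimate to a penalty of the required form.

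The second gap is regularity. Theorem \ref{thm:oMPFC} requires $E$ to be $\mcl{C}^1$, but under the standing assumptions ($F$ merely continuous by Assumption \ref{ass:F}, terminal inputs only piecewise continuous, flows only locally Lipschitz) the cost-to-go need not be differentiable in $(x,z)$, and in $t$ it is only absolutely continuous across input discontinuities. You acknowledge this, but your proposed remedy -- arguing that only the derivative along the specific terminal trajectory is needed -- amounts to weakening the theorem's hypotheses rather than verifying them, so the conclusion ``$\tilde E$ and $\mcl{E}$ satisfy the conditions of Theorem \ref{thm:oMPFC}'' is not actually established. The paper's explicit exponential $\tilde E(t)$ sidesteps all of this: it is smooth, uniformly bounded, positive, and state-independent by construction. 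If you replace your cost-to-go by its uniform upper bound $\bar c\,\underline\alpha^{-1}e^{-\underline\alpha(t-t_0)}$ (which your own estimate already provides), your argument reduces to the paper's proof and both gaps disappear.
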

}
\begin{proof}
Without difficulties if follows from part ii) of the lemma that for all $(\tilde x, \tilde z)^T \in \mcl{E}$ and all $t \in [t_0, \infty)$
\[
\varphi_\mcl{E}(t)
\leq c(\tilde x, \tilde z) e^{-\alpha(\tilde x, \tilde z)(t-t_0)} \leq c(\tilde x, \tilde z) e^{-\underline{\alpha}(t-t_0)} \leq \overline{c} e^{-\underline{\alpha}(t-t_0)}.
\]
It is easy to verify that for all $(\tilde x, \tilde z)^T \in \mcl{E}$
\begin{equation} \label{eq:tildeE}
\tilde{E}(t) = \overline{c}\underline{\alpha}^{-1} e^{-\underline \alpha (t-t_0)}
\end{equation}
satisfies the cost decrease condition \eqref{eq:ass2}.
\end{proof}
Now, consider two variants of the objective functional \eqref{eq:OCP_J} differing only by the end penalty
\begin{align*}
J_1\left( \bar u_k(\cdot), \bar v_k(\cdot)\right) &= \int_{t_k}^{t_k+ T}
      F\left(\bar e(\tau), \bar \theta(\tau), \bar u_k(\tau),\bar v_k(\tau)  \right)d\tau, \\
 J_2\left(\bar u_k(\cdot), \bar v_k(\cdot)\right) &= \int_{t_k}^{t_k +T} 
      F\left(\bar e(\tau), \bar \theta(\tau), \bar u_k(\tau),\bar v_k(\tau)  \right)d\tau \\&\phantom{=}+\tilde E(t_k+T).
\end{align*}
Note that for sake of simplified notation we neglect the dependence of $J_{i}, i \in\{1,2\}$ on $x(t_k), \bar z(t_k)$. We denote two variants of OCP \eqref{eq:OCP} as follows: \eqref{eq:OCP} with $J_1$ is denoted as OCP$_1$ and \eqref{eq:OCP} with $J_2$ is denoted as OCP$_2$.

\begin{lemm}[Equivalence of optimal solutions] \label{lem:OCP_equivalence}\textit{\quad Consider OCP$_i,$ $i \in \{1,2\}$ subject to the same initial condition $x(t_k), \bar z(t_k)$. The following two statements hold:
\begin{itemize}
\item[i)] Suppose that OCP$_1$ has an optimal solution $\bar u_k^\star(\cdot), \bar v_k^\star(\cdot)$, then $\bar u_k^\star(\cdot), \bar v_k^\star(\cdot)$ is also an optimal solution to 
OCP$_2$.
\item[ii)] Suppose that OCP$_2$ has an optimal solution $\bar u_k^\star(\cdot), \bar v_k^\star(\cdot)$, then $\bar u_k^\star(\cdot), \bar v_k^\star(\cdot)$ is also an optimal solution to 
OCP$_1$.
\end{itemize}
}
\end{lemm}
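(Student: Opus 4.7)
The plan is to exploit the crucial structural feature of the terminal penalty $\tilde{E}$ constructed in Lemma~\ref{lem:ExistE}: by the explicit form \eqref{eq:tildeE}, $\tilde{E}(t) = \overline{c}\,\underline{\alpha}^{-1} e^{-\underline{\alpha}(t-t_0)}$ is a function of time only, and is therefore independent of the state $(x,z)$, of the initial condition $x(t_k),\bar z(t_k)$, and, most importantly, of the decision variables $(\bar u_k(\cdot), \bar v_k(\cdot))$.

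First I would observe that OCP$_1$ and OCP$_2$ share exactly the same constraint set \eqref{eq:OCP_x}--\eqref{eq:OCP_term_con} and the same initial condition $x(t_k),\bar z(t_k)$; hence their feasible sets in $\mcl{P}\mcl{C}(\mcl{U}\times\mcl{V})$ coincide. Then, for any feasible pair $(\bar u_k(\cdot),\bar v_k(\cdot))$ at sampling instant $t_k$, the two objectives differ by the constant $\tilde{E}(t_k+T)$, i.e.
\[
J_2\bigl(\bar u_k(\cdot),\bar v_k(\cdot)\bigr) \;=\; J_1\bigl(\bar u_k(\cdot),\bar v_k(\cdot)\bigr) \;+\; \tilde{E}(t_k+T),
\]
where $\tilde{E}(t_k+T)\in\mathbb{R}$ is fixed at the moment the OCP is posed.

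Second, I would invoke the elementary fact that adding a constant to an objective function does not alter its set of minimizers over a fixed feasible set. Consequently, any $(\bar u_k^\star(\cdot),\bar v_k^\star(\cdot))$ that is optimal for OCP$_1$ is also optimal for OCP$_2$, and vice versa, establishing both i) and ii) simultaneously.

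The argument has no real obstacle; the only point that deserves care is the conceptual one of verifying that the feasible sets genuinely coincide and that no hidden dependence on $(\bar u_k(\cdot),\bar v_k(\cdot))$ is smuggled in through $\tilde{E}$. Because $\tilde{E}$ in \eqref{eq:tildeE} was constructed purely as a function of $t$, this verification is immediate and the proof collapses to the one-line constant-shift observation above.
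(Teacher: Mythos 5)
Your proposal is correct and follows essentially the same route as the paper: both arguments rest on the observation that $\tilde{E}(t_k+T)$ from \eqref{eq:tildeE} depends only on time, so $J_2$ and $J_1$ differ by a constant over the common feasible set, and adding a constant does not change the minimizers. The only cosmetic difference is that you state the coincidence of feasible sets explicitly and handle i) and ii) in one stroke, whereas the paper proves i) and remarks that ii) follows by the same argument.
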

\begin{proof}
We first consider statement i).
For any admissible choice of $\bar u_k(\cdot), \bar v_k(\cdot)$ it holds that
\[
 J_2\left(\bar u_k(\cdot), \bar v_k(\cdot)\right) - J_1\left( \bar u_k(\cdot), \bar v_k(\cdot)\right) = \tilde E(t_k+T).
\]
This means that, for any sampling instant $t_k$ and any initial condition $x(t_k), \bar z(t_k)$, the two objective functionals $J_{i}, i \in\{1,2\}$ only differ by a constant. And since $\tilde{E}(t)$ from \eqref{eq:tildeE} depends only on $t$ and not on $x$ or $z$, the value of $\tilde E(t_k+T)$ is not influenced by the choice of $\bar u_k(\cdot), \bar v_k(\cdot)$. 
Hence, any input $\bar u_k^\star(\cdot), \bar v_k^\star(\cdot)$, which is an optimal solution to OCP$_1$, is also an optimal solution to OCP$_2$.  The proof of statement ii) is obtained without difficulties based on similar arguments.
\end{proof}
The next result shows how the last two lemmas can be combined.  

\begin{prop}[Trivial end penalty $E(t, x(t), z(t)) = 0$] \label{prop:E(t)}
\textit{ \\
Suppose that there exist terminal controls $(u_{\mcl{E}}(\cdot), v_{\mcl{E}}(\cdot)) \in \mcl{P}\mcl{C}(\mcl{U}\times\mcl{V})$ defined for all $t \in [t_0, \infty)$ and a compact terminal region $\mcl{E} \subset \mcl{X} \times {Z}$ such that conditions i)--ii) of Lemma \ref{lem:ExistE} hold. 
}
\textit{
Then the MPFC scheme using the end penalty $E(t,x(t),z(t)) = 0$ and the terminal region $\mcl{E}$ solves Problem \ref{prob:path}. 
}
\end{prop}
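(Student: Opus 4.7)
The plan is to reduce the claim to Theorem \ref{thm:oMPFC} by combining Lemma \ref{lem:ExistE} with Lemma \ref{lem:OCP_equivalence}. First, I would invoke Lemma \ref{lem:ExistE} using the very terminal controls $(u_{\mcl{E}}(\cdot), v_{\mcl{E}}(\cdot))$ and the terminal region $\mcl{E}$ hypothesized in the proposition. This yields the time-dependent end penalty $\tilde E(t) = \overline{c}\underline{\alpha}^{-1} e^{-\underline \alpha (t-t_0)}$ from \eqref{eq:tildeE}, which together with $\mcl{E}$ satisfies conditions i)--ii) of Theorem \ref{thm:oMPFC}; condition iii) (feasibility at $t_0$) follows from the positive invariance of $\mcl{E}$ under the terminal controls. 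Consequently, Theorem \ref{thm:oMPFC} applies and the MPFC scheme based on OCP$_2$ (i.e., OCP \eqref{eq:OCP} with end penalty $\tilde E$) solves Problem \ref{prob:path}.

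Second, I would use Lemma \ref{lem:OCP_equivalence} to transfer this conclusion to the scheme using the trivial end penalty $E(t,x,z) \equiv 0$, which is precisely OCP$_1$. Because $\tilde E(t_k+T)$ depends only on $t_k$, not on the decision variables, OCP$_1$ and OCP$_2$ differ by a constant at every sampling instant and hence share the same set of optimal input pairs, given identical initial data $x(t_k),\bar z(t_k)$.

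The only subtlety is the dynamic coupling of the MPFC scheme through the internal state $\bar z(t_k)$ defined in Step~4 of Figure~\ref{fig:MPFC}. I would resolve this by induction on $k$: the initial data at $t_0$ is common to both schemes by construction; assuming at $t_k$ that the two schemes share $(x(t_k),\bar z(t_k))$, Lemma \ref{lem:OCP_equivalence} provides a common optimal pair $(\bar u_k^\star(\cdot),\bar v_k^\star(\cdot))$, so the applied input in Step~3 and the updated internal state $\bar z(t_{k+1})$ produced in Step~4 coincide for both schemes, closing the induction. Hence the two closed-loop systems produce identical state and input trajectories, and the convergence, forward-motion and constraint-satisfaction properties established for the $\tilde E$-scheme via Theorem \ref{thm:oMPFC} immediately carry over to the $E \equiv 0$ scheme, proving the claim.

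The main obstacle I anticipate is the possible non-uniqueness of optimizers in \eqref{eq:OCP}, which could break the inductive step if the two schemes were to choose different minimizers at some $t_k$. I would address this by observing that Theorem \ref{thm:oMPFC} holds for \emph{any} admissible selection of optimizers from the optimal set, so one may fix a common selection rule across both schemes; equivalently, \textbf{any} closed-loop trajectory generated by the $E \equiv 0$ scheme is also a closed-loop trajectory generated by the $\tilde E$ scheme, and therefore inherits its convergence properties.
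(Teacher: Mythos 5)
Your proposal is correct and takes essentially the same route as the paper: invoke Lemma \ref{lem:ExistE} to obtain the time-dependent penalty $\tilde E$ satisfying the conditions of Theorem \ref{thm:oMPFC}, then use Lemma \ref{lem:OCP_equivalence} to conclude that the scheme with $E \equiv 0$ produces the same optimal inputs and hence the same closed-loop behavior, with your induction over sampling instants and the remark on non-unique optimizers merely spelling out what the paper leaves implicit. One small caveat: initial feasibility of the OCP at $t_0$ does not follow from positive invariance of $\mcl{E}$ (the initial augmented state need not lie in $\mcl{E}$); it is, as in Theorem \ref{thm:oMPFC}, a standing assumption that the paper also treats implicitly.
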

\begin{proof}
From Lemma \ref{lem:ExistE} we know that  $\tilde E$ and $\mcl{E}$ satisfy the conditions of Theorem \ref{thm:oMPFC}, i.e., they enforce path convergence and convergence on the path. From Lemma \ref{lem:OCP_equivalence} we know that using $E(t) = 0$ instead of $\tilde E(t)$ in OCP \eqref{eq:OCP} we obtain inputs that are optimal for OCP \eqref{eq:OCP} with $\tilde E(t)$. Thus, applying the end penalty $E(t,x(t),z(t)) = 0$ combined with the terminal region $\mcl{E}$, we obtain the conclusions of Theorem \ref{thm:oMPFC}.
\end{proof}

\begin{rema}[Exponentially stabilizing terminal controls laws] ~\\
The main insight obtained by the last proposition can be summarized as follows: from the stability point of view, terminal controls, which ensure exponential cost decrease in the sense of Lemma \ref{lem:ExistE}, render it unnecessary to determine terminal penalties. However, suitably chosen terminal penalties, can improve closed-loop performance. 
Thus, it is not surprising that the last proposition can be adjusted to other NMPC schemes designed for stabilization or trajectory tracking, cf. \cite{ifat:faulwasser11b}. 
\end{rema}
\subsection{Geometric Structure of Path-following Problems}
Next, we show that  path-following problems are equivalent to the problem of stabilizing a certain manifold in the state space, see also \cite{Skjetne04, Nielsen08a, ifat:faulwasser13a_short}.  For sake of simplicity the considerations focus on constrained path following without velocity assignment (Problem \ref{prob:path}). Corresponding results can be also established for path following with velocity assignment (Problem \ref{prob:path_velo}).
%Next we investigate  the geometric properties of path-following problems.
 To this end we restrict the class of considered systems \eqref{eq:sys}.

\begin{ass}[Vector relative degree] \label{ass:vec_rel} \quad 
System \eqref{eq:sys} has a square input-output structure, i.e., $\dim u = n_u = n_y = \dim y$ holds, and it has a well-defined vector relative degree
\begin{equation} \label{eq:vec_rel}
r = (r_1, \dots, r_{n_y})^T,\quad \hat r = \max\{r_1, \dots, r_{n_y}\},  \quad \rho = \sum_i^{n_y} r_i
\end{equation}
 on a sufficiently large set $\tilde{\mcl{X}} \subseteq \mcl{X}\subseteq\Rnx$.
\end{ass}

Readers not familiar with the notion of a vector relative degree of a nonlinear system are referred to \cite[Chap. 5]{Isidori95a} and \cite{Nijmeijer90a}.
In essence, the last assumption implies that \eqref{eq:sys} is locally static input-output feedback linearizable. A key ingredient in the further investigations will be the notion of a \textit{transverse normal form} of a path-following problem, which is in essence a nonlinear input-output normal form tailored to path-following problems, see also \cite{Banaszuk95a, Nielsen08a}. 
\textit{
\begin{lemm}[Local existence of a transverse normal form] \label{lem:TNF} 
Consider system \eqref{eq:sys}. Suppose that Assumption \ref{ass:vec_rel} holds and that in the timing law \eqref{eq:timing} $\hat r$ from \eqref{eq:vec_rel} is used. Then the following statements hold for all $(x, z)^T \in \mcl{N}_{\tilde x}\times\mcl{Z}$ with $\tilde x \in \interior(\tilde{\mcl{X}})$:
\begin{itemize}
\item[i)]  The augmented system \eqref{eq:sys_aug} has a well-defined vector relative degree $\tilde r = (r_1, \dots, r_{n_y}, \hat r)^T$.
\item[ii)] There exists a local diffeomorphism $\Phi: \mbb{R}^{n_x}\times\mbb{R}^{\hat r} \to \mbb{R}^{\rho}\times\mbb{R}^{n_x+\hat r - \rho}, \, (x,z) \mapsto (\xi, \eta)$ such that \eqref{eq:sys_aug} is equivalent to 
 a transverse normal form
			\begin{subequations} \label{eq:sys_transverse}
			\begin{align}
			\dot{\xi}_i  &=  \tilde{{I}}^{r_i-1} \xi_i + 
			  \begin{pmatrix}
			  {0}^{r_i-1, 1}  \\ \alpha_i(\xi_1, \dots, \xi_{n_y}, \eta, u, v)
		    \end{pmatrix}, \quad   i \in \{1, \dots, n_y\}\\
			\dot \eta &= \beta(\xi, \eta, u, v)
			\end{align}
			\end{subequations}
		with
		\[\xi = \begin{array}{c c c c c}
		\big(\underbrace{e_1, \dot e_1, \dots, e_1^{(r_1-1)}},&
		%\underbrace{e_2, \dots, e_2^{(r_2-1)}},& 
		\dots, &\underbrace{e_{n_y}, \dots,
		e_{n_y}^{(r_{n_y}-1)}} & \big)^T, \\
		  \xi_1 & 
		%\xi_2  & 
		 & \xi_{n_y}&
		\end{array} \]
  		 whereby $  \xi \in \mbb{R}^{\rho}$ and $\rho = \sum_{i=1}^{n_y} r_i$, $\eta \in \mbb{R}^{n_x +\hat r - \rho}$.
\end{itemize}
\end{lemm}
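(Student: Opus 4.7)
The plan is to split the argument into two pieces, exactly matching the two claims of the lemma, and treat them in order. For part i) I would proceed by direct differentiation of the two augmented outputs $e_i = h_i(x) - p_i(z_1)$ and $\theta = z_1$ along the augmented dynamics \eqref{eq:sys_aug}. The key observation is a clean separation of where $u$ and $v$ can enter: by Assumption \ref{ass:vec_rel}, $u$ appears in $L_f^k h_i(x)$-style derivatives only at order $k = r_i$, through the original decoupling row $L_{g_j} L_f^{r_i - 1} h_i$; meanwhile, the $p_i(z_1)$-part, differentiated $k$ times along $\dot z = \tilde{I}^{\hat r} z + E^{\hat r} v$, is a polynomial in $z_1, \dots, z_{k+1}$ by the chain rule, so $v$ can appear only once $k+1 > \hat r$. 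Since $r_i \leq \hat r$, this shows that $e_i^{(k)}$ is independent of both $u$ and $v$ for $k < r_i$, while $e_i^{(r_i)}$ depends nontrivially on $u$ (and on $v$ iff $r_i = \hat r$, in which case the coefficient is $-p_i'(z_1)$). For the extra output $\theta = z_1$ one simply reads off $\theta^{(k)} = z_{k+1}$ for $k < \hat r$ and $\theta^{(\hat r)} = v$.

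Next I would assemble the decoupling matrix $\tilde D$ of the augmented system, ordering inputs as $(u_1, \dots, u_{n_u}, v)$ and outputs as $(e_1, \dots, e_{n_y}, \theta)$, and observe that the computations above force the block-triangular form
\[
\tilde D(x, z_1) = \begin{pmatrix} D_0(x) & c(x, z_1) \\ 0 & 1 \end{pmatrix},
\]
where $D_0$ is the (nonsingular) decoupling matrix of the original system on $\tilde{\mcl{X}}$ and $c$ collects the contributions $-p_i'(z_1)\,\mathbf{1}_{\{r_i = \hat r\}}$. Then $\det \tilde D = \det D_0 \neq 0$, which yields the augmented vector relative degree $\tilde r = (r_1, \dots, r_{n_y}, \hat r)$ and completes part i).

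For part ii) I would invoke the classical input-output normal form construction (as in Isidori or the transverse variant of Banaszuk--Hauser/Nielsen--Maggiore). Define $\xi_{i,k} := e_i^{(k-1)}$ along the augmented flow for $i = 1, \dots, n_y$ and $k = 1, \dots, r_i$. Part i) guarantees that these $\rho$ functions have linearly independent differentials on $\mcl{N}_{\tilde x} \times \mcl{Z}$, so they can be completed to a local diffeomorphism $\Phi: (x, z) \mapsto (\xi, \eta)$ on that neighborhood by adjoining $n_x + \hat r - \rho$ further smooth functions $\eta$; a natural choice is to keep the $\hat r$ timing-law coordinates $z$ and complete with $n_x - \rho$ functions on the $x$-factor transverse to $\xi$. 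Written in the new coordinates, the identities $\xi_{i,k+1} = \dot\xi_{i,k}$ yield the Brunovsky chain-of-integrators structure of each $\dot\xi_i$-block automatically, the last entry $\alpha_i$ is exactly the quantity computed in part i), and the $\eta$-equation is simply the push-forward of $\dot\eta = d\eta \cdot (\dot x, \dot z)^T$ through $\Phi^{-1}$, which is a smooth function $\beta(\xi, \eta, u, v)$.

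The main obstacle I anticipate is the bookkeeping in part i), specifically the case distinction between $r_i < \hat r$ and $r_i = \hat r$, which controls whether $v$ leaks into the $i$-th error output at order $r_i$; this is what shapes the off-diagonal column $c$ of $\tilde D$ but, thanks to the block-triangular structure, does not affect the nonsingularity. Once that case analysis is in hand, the rest is routine: the invertibility of $\tilde D$ reduces to the standing hypothesis on $D_0$, and the completion step in part ii) is a standard consequence of the inverse function theorem applied to the Jacobian of $(\xi, \eta)$. No Frobenius-type PDE has to be solved, because the statement only requires $\dot\eta = \beta(\xi, \eta, u, v)$ rather than $\eta$-dynamics independent of the inputs.
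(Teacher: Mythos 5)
Your proposal is correct and follows essentially the same route as the paper: differentiate the augmented outputs, note that the $z$-chain lets $v$ enter the error outputs only at order $\hat r$ (and $\theta$ only at order $\hat r$), obtain the block-triangular augmented decoupling matrix whose invertibility reduces to that of the original decoupling matrix on $\tilde{\mcl{X}}$, and then invoke the standard normal-form completion (Isidori, Prop.~5.1.2) for part ii), e.g.\ keeping $\eta_1 = z$. The only cosmetic difference is that you make the off-diagonal column explicit as $-p_i'(z_1)$ for $r_i = \hat r$, which the paper leaves as unspecified entries that do not affect the rank.
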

}
\begin{proof}
The proof mainly exploits the fact that the dynamics of $x$ and $z$ are only coupled via the output of \eqref{eq:sys_aug}. 
We show how the Lie derivatives of the output of the augmented system \eqref{eq:sys_aug} can be obtained, and thereby  we proof part i) of the Lemma. 
Part ii) follows directly by results given in \cite{Isidori95a, Nijmeijer90a}. 

Using the simple change of coordinates $\chi = (x,z)^T, ~\nu =(u,v)^T, ~\mu = (e,\theta)^T$ system \eqref{eq:sys_aug} can be
written as
\begin{subequations} \label{eq:sys_aug_chi}
  \begin{align}
    \dot \chi &= \phi(\chi) + \sum_{j=1}^{n_u+1} \omega_j(\chi)\nu_j \\
    \mu &= \psi(\chi).
  \end{align}
\end{subequations}
The vector fields $\phi:\mbb{R}^{n_x+\hat r} \to \mbb{R}^{n_x+\hat r}, \omega_j: \mbb{R}^{n_x+\hat r} \to \mbb{R}^{n_x+\hat r},
\psi: \mbb{R}^{n_x+\hat r}\to \mbb{R}^{n_y+ 1}$ follow directly from \eqref{eq:sys_aug}.

Calculating the Lie derivatives of $\psi_i(\chi) = h_i(x) - p_i(z_1),\, i\in\{1,\dots, n_y\}$ with respect to $\nu_j, j \in \{1, \dots, n_y+1\}$ yields
\begin{subequations} \label{eq:decoupMatElm}
\begin{equation}
  \LomegaLphipsi{j}{k}{i} =
\left\{\begin{array}{l l}
\Lie_{g_j}\Lfhx{k}{i} & j \in \{1,\dots, n_y\}\\
\Lie_{E}^{\phantom{i}}\Lie_{\tilde I}^kp_i(z_1) & j = n_y+1
\end{array}\right.
. 
\end{equation}
Assumption \ref{ass:vec_rel} %(well-defined vector relative degree of the original system \eqref{eq:sys}) 
implies that $\Lie_{g_j}\Lfhx{k}{i} =0$ for $k \in \{1,\dots, r_i-2\}, ~ i,j \in \{1,\dots, n_y\}$.
From \eqref{eq:sys_aug} it follows that 
$ \Lie_{E}^{\phantom{i}}\Lie_{\tilde I}^kp_i(z_1)=0$ for $k \in \{1,\dots, \hat r-2\},~   i \in \{1,\dots, n_y\} $.

Due to Assumption \ref{ass:vec_rel}  it is clear that for $k = r_i-1$ and at least one $j \in \{1, \dots, n_y\}$
\begin{equation}
 \LomegaLphipsi{j}{k}{i} =  \Lie_{g_j}\Lfhx{k}{i} \neq 0.
\end{equation}

\begin{figure*}[t]
		\begin{center}
  \includegraphics[width=0.72\textwidth]{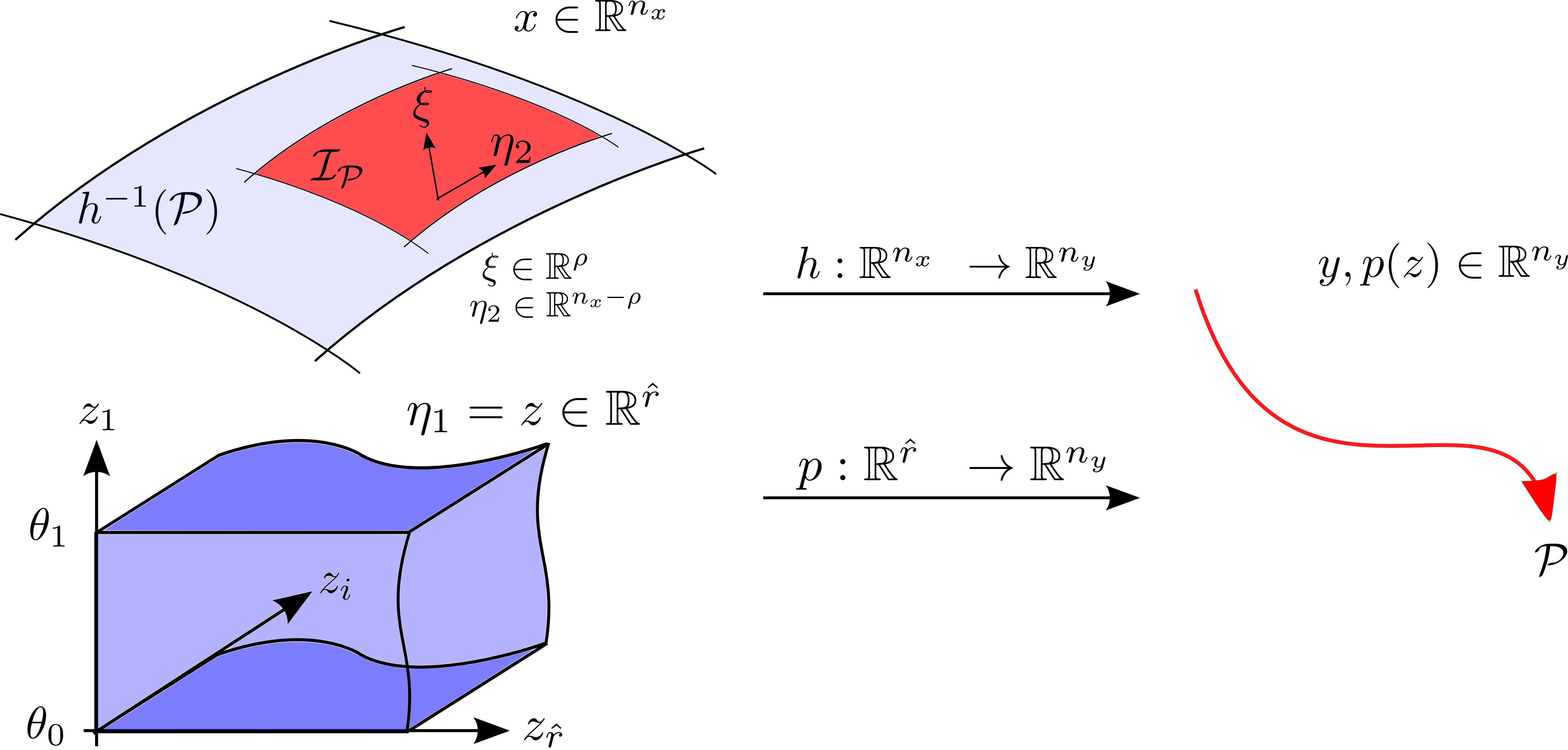}
\caption{Geometric interpretation of the transverse normal form
\eqref{eq:sys_transverse}. \label{fig:TNF}}		
\end{center} 
\end{figure*}

Now consider the case $i = n_y+1$, i.e., consider the output $\mu_{n_y+1}= \theta = z_1$ of \eqref{eq:sys_aug}. Since this output is only influenced by $\nu_{n_y+1} = v$, it follows  that
\begin{equation}
 \LomegaLphipsi{j}{k}{n_y+1} = \left\{
\begin{array}{l} 0,\, j \in \{1, \dots, n_y+1\} , k \in \{1, \dots, \hat r -2\}\\
 0,\, j \in \{1, \dots, n_y\} , k = \hat r -1 \\
1,\,  j = n_y +1,  k = \hat r -1.
\end{array}
\right.
\end{equation}
\end{subequations}
The conditions (\ref{eq:decoupMatElm}a--c) imply that  the decoupling matrix of \eqref{eq:sys_aug_chi} has the following structure
\begin{align*}
 {A}(\chi) &= 
		   \left(\begin{array}{c c c|c}
                    	   \Lie_{g_1}\Lfhx{r_1-1}{1} &\dots & \Lie_{g_{n_y}}\Lfhx{r_1-1}{1}  &  \ast \\
			   \vdots    & \ddots & \vdots &  \vdots \\
			   \Lie_{g_1}\Lfhx{r_{n_y}-1}{n_y}    &\dots & \Lie_{g_{n_y}}\Lfhx{r_{n_y}-1}{n_y} &  \ast \\  \hline
			     0 & \dots &  0 & 1
                   \end{array}\right) \\ & = \left(\begin{array}{c|c} {A}(x) & \mbf{\ast} \\ \hline  {0}^{1, n_y} & 1 \end{array}\right).	
\end{align*}
Note that this matrix is an upper triangular block matrix, whereby the decoupling matrix of the original system \eqref{eq:sys} appears as upper left block. 
The stars in the upper right block replace $\LomegaLphipsi{j}{k}{i}$ for $i \in \{1, \dots, n_y\}, j = n_y+1, k = r_i-1$. These terms  are either $0$ (for $r_i < \hat r$) or  $\neq 0$ (for $r_i = \hat r$). However, they do not affect the rank of ${A}(\chi)$. 
Due to Assumption \ref{ass:vec_rel} we know that the upper left block ${A}(x)$ of this matrix has full rank in an open neighborhood $\mcl{N}_{\tilde x}$ of $\tilde x$. Thus ${A}(\chi)$ has full rank on $\mcl{N}_{\tilde x}\times\mcl{Z}$. 
From this and \eqref{eq:decoupMatElm} it follows that part i) of the lemma is verified, i.e., on $\mcl{N}_{\tilde x}\times\mcl{Z}$ the augmented system \eqref{eq:sys_aug} has a vector relative degree of $\tilde r = (r_1, \dots, r_{n_y}, \hat r)^T$.

In order to obtain the diffeomorphism that maps the system to a (local) transverse normal form one picks $\xi_i = \Lie^{k}_\phi\psi_i$ for $k \in \{0, \dots, r_i-1\}, i = \{1, \dots, n_y\}$ as new coordinates. This specifies $\rho$ coordinates $\xi$ with $\rho = \sum_{i=1}^{n_y} r_i \leq n_x + \hat r$, which are transverse to the path manifold $\mcl{I}_\mcl{P}$ characterized by $\xi = 0$, cf. Figure \ref{fig:TNF}. The existence of additional $n_x + \hat r - \rho$ independent coordinates follows directly from the fact that the augmented system has a well-defined vector relative degree,  cf.\cite[Prop. 5.1.2]{Isidori95a} or \cite{Nijmeijer90a}.\footnote{For instance, one can pick $\hat r$ coordinates by the identity $\eta_1 = z$. This way it only remains to pick 
$n_x - \rho$ coordinates, which are directly related to the internal dynamics of \eqref{eq:sys_dyn} with respect to the output \eqref{eq:sys_out}. This situation is also illustrated in 
Figure \ref{fig:TNF}. We refer to \cite[Chap. 4]{ifat:faulwasser13a_short} for an example showing that also $\eta_1 \neq z$ might be helpful in some cases.}
 This finishes the proof. 
\end{proof}
\vspace*{0.2cm}

\begin{rema}[Transverse normal forms] \label{rem:TNF} 
Note that the directions $\xi \in \mbb{R}^\rho$ are composed of the path error $e = h(x) - p(z_1)$ and its derivatives, i.e., these directions point away from the path manifold. A graphical interpretation of this situation is depicted in Figure \ref{fig:TNF}. More precisely, these directions are transverse---i.e., orthogonal---to the manifold of trajectories, which travel along the path $\mcl{I}_\mcl{P}$. This transversality is the reason to denote \eqref{eq:sys_transverse} as a \textit{transverse normal form}.  
It should be recognized that the directions $\eta \in  \mbb{R}^{n_x +\hat r - \rho}$ are not specified. This implies that transverse normal form descriptions are usually not unique. 
\textcolor{black}{
Additionally, it is worth to be mentioned that Assumption \ref{ass:vec_rel} is only sufficient but not necessary for the existence of a transverse normal form. 
If $\dim u > \dim y$, one can use ideas from \cite[Chap. 5]{Isidori95a} to derive the normal form. If $\dim u < \dim y$, the situation is more complicated. In the special case $\dim y = \dim u +1$ one can attempt to use the virtual input $v$ to achieve $\dim y = \dim (u,v)^T$.   An example of a transverse normal form for a system with $\dim u = 1$ and $\dim y = 2$ can be found in \cite[Chap. 4.3]{ifat:faulwasser13a_short}.}

Note that such descriptions of path-following problems were initially proposed in \cite{Banaszuk95a, Nielsen08a}. Thus, results similar to Lemma \ref{lem:TNF} can, for instance, be found in \cite{Nielsen08a}. However, our approach slightly differs from these results: We work with a known path parametrization $p:\mbb{R} \to \Rny$ from \eqref{eq:path} and the corresponding augmented system \eqref{eq:sys_aug}, while the results in \cite{Nielsen08a} consider implicitly defined paths where no parametrization is known. The consideration of the path parameter states $z$ in the augmented system \eqref{eq:sys_aug} allows the description of the reference motion along of the path.
\vspace*{1mm}
\end{rema}

Beyond these structural observations the lemma reveals that output path-following implies the stabilization of a specific manifold---the so-called path manifold, denoted as $\mcl{I}_\mcl{P}$ in Figure \ref{fig:TNF}---in the state space. This manifold is locally characterized by the condition $\xi = 0$. Hence, it is not surprising that the computation of terminal regions and end penalties satisfying Theorem \ref{thm:oMPFC} is in general challenging. In essence, such a computation implies to solve at least locally a manifold stabilization problem in the presence of input and state constraints.%%%
\footnote{At this point it is fair to ask for sufficient or necessary \textit{path-followability} conditions. In other words, one may ask for conditions ensuring that a system can be steered along a path exactly. This question is beyond the scope of this paper. Results in this direction for unconstrained and constrained systems can be found in \cite{ifat:faulwasser14b, ifat:faulwasser13a_short}.}
%%%
 One may wonder whether there is any hope to compute terminal penalties for the MPFC schemes \eqref{eq:OCP} along the lines of \cite{Chen98, Michalska93}, i.e., based on a linearization of the augmented dynamics \eqref{eq:sys_aug} around a specific point. This is in general difficult for two reasons: First, the constraints on the path parameter states $z \in \mcl{Z}$ \eqref{eq:Z} imply that the final path point $\theta_1$ is not contained in the interior of $\mcl{Z}$. Thus ellipsoidal terminal regions based on a linearization of \eqref{eq:sys_aug} or \eqref{eq:sys_transverse} at a single point of the state space are not well suited, since $(\theta_1, 0, \dots, 0)^T \in \partial \mcl{Z}$ implies that any ellipsoidal terminal region would shrink to a single point in the directions associated with the path parameter state $z$.
Second, the structure of the internal dynamics of the transverse normal form \eqref{eq:sys_transverse}  has to be taken into account. Thus it is difficult to state a general procedure for the computation of suitable terminal regions. 

\begin{rema}[MPFC without terminal constraints] \label{rem:no_term_con}
To  reduce the computational burden one might also ask for conditions which ensure path convergence without terminal constraints. For NMPC for set-point stabilization such conditions are discussed i.a. in \cite{Jadbabaie05a,Gruene09a}. For predictive path following two major issues arise if one attempts to drop the terminal constraint:
\begin{itemize}
\item[i)] The guarantees of recursive feasibility in the presence of state constraints are in general lost, respectively, rather difficult to enforce. As a remedy one could drop the state constraints of the real system \eqref{eq:sys}. The forward motion requirement ii) of Problem \ref{prob:path}, however, inevitably leads to constraints of the virtual states $z$. Thus one might need to drop the forward motion requirement as well as and merely require convergence of the path parameter, i.e., $\displaystyle \lim_{t\to \infty} \|\theta(t) - \theta_1\| = 0$. 
\item[ii)] Note that the presence of the terminal constraints, which are a compact set $\mcl{E}\subseteq \mcl{X}$, implies that all states remain bounded during the application of the MPFC scheme. This is due to the assumed continuity of solutions (Assumption \ref{ass:cont_sol}) and the fact that at the end of each prediction over a finite horizon the augmented state $(x,z)^T$ has to be inside the compact terminal set.  If neither (compact) state constraints $x \in \mcl{X}$ nor a  compact terminal constraint are considered, extra care has to be taken in order to ensure boundedness of the states. Taking Lemma \ref{lem:TNF} into account, it is clear that the states contained in the zero dynamics of the augmented system \eqref{eq:sys_aug} with respect to the outputs $e,\theta$ might cause difficulties,  for instance, due to non-minimum phase behavior. 
Preliminary results presented in \cite{ifat:faulwasser12a} indicate that via structural assumptions on the system dynamics these issues might be avoided.
\end{itemize}
\end{rema}

\begin{rema}[Generalized cost functions for MPFC] \label{rem:cost}
In the view of  the transverse normal forms of Lemma \ref{lem:TNF} one could as well penalize not only the path-following error $e$ but also its time derivatives in the cost function  $F$ as for instance considered in \cite{epfl:faulwasser13b, Boeck13a, ifat:faulwasser12a}. However, note that to enforce path convergence it suffices to rely on Assumption \ref{ass:F}, i.e., lower boundedness of $F$ by $\underline{\psi}(\|e, \theta-\theta_1 \|)$.
\end{rema}

Finally, it should be mentioned that the rewriting the augmented system \eqref{eq:sys_aug} in a transverse normal form is not necessary to design an MPFC scheme. Indeed for many examples system descriptions in transverse coordinates exist only locally. However, transverse coordinates are very helpful
in the sense that they allow to gain insight to the geometry of path-following problems and their use often simplifies the design of terminal regions. In the next section and in Appendix \ref{app:term_region} we draw upon an example from robotics to demonstrate this.

\section{Example: Fully Actuated Robot}  \label{sec:example}

To illustrate the proposed MPFC scheme we consider a fully actuated planar robot with two degrees of freedom.
Without friction and external contact forces the dynamics of such a robot are given by
\begin{subequations}\label{eq:robot}
\begin{align} 
\begin{pmatrix} 
\dot x_1 \\ \dot x_2
\end{pmatrix}
&=
\begin{pmatrix}
x_2 \\
B^{-1}(x_1)\left( u - C(x_1, x_2)x_2 - g(x_1) \right)
\end{pmatrix} \label{eq:robot_dyn}\\
y\phantom{~~~} &= x_1  \label{eq:robot_out}  \\
y_{ca~} & = h_{ca}(x_1)
\end{align}
\end{subequations}

\begin{figure*}[t]
\centering   
\subfloat[Simulated closed-loop trajectories for 2-DoF robot.]{\label{fig:robot_1}\includegraphics[width=.99\textwidth]{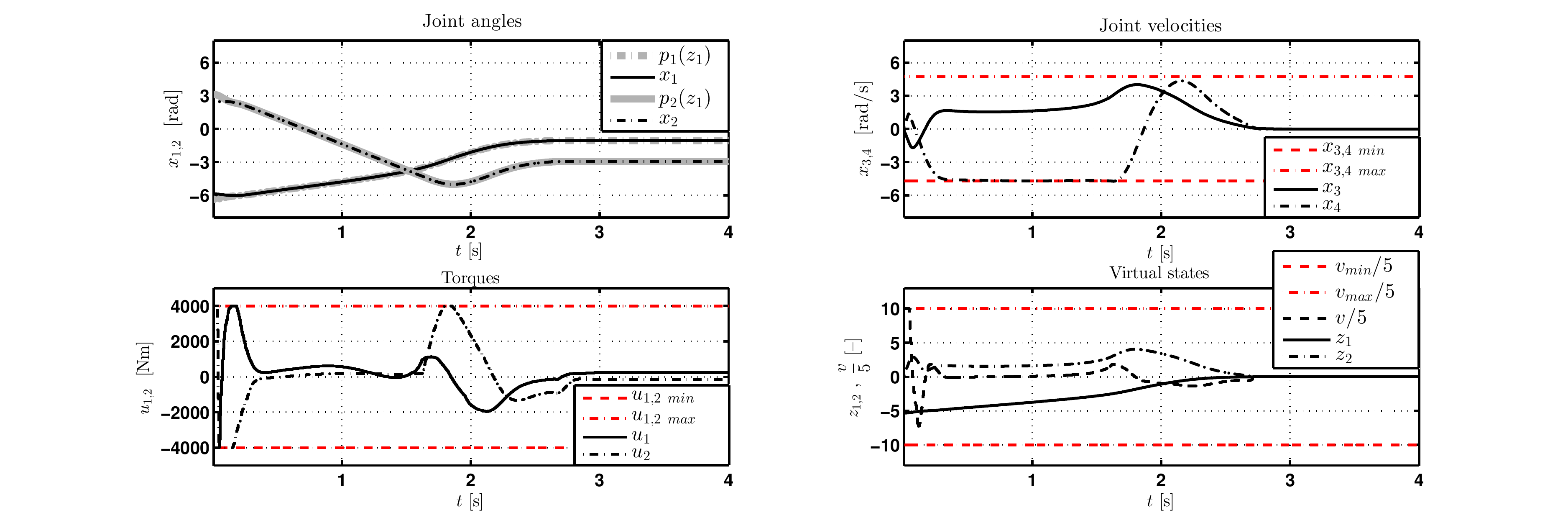}}\\
\subfloat[Path convergence in joint space (left) and Cartesian output space (right).]{ \label{fig:robot_2} \includegraphics[width=0.65\textwidth]{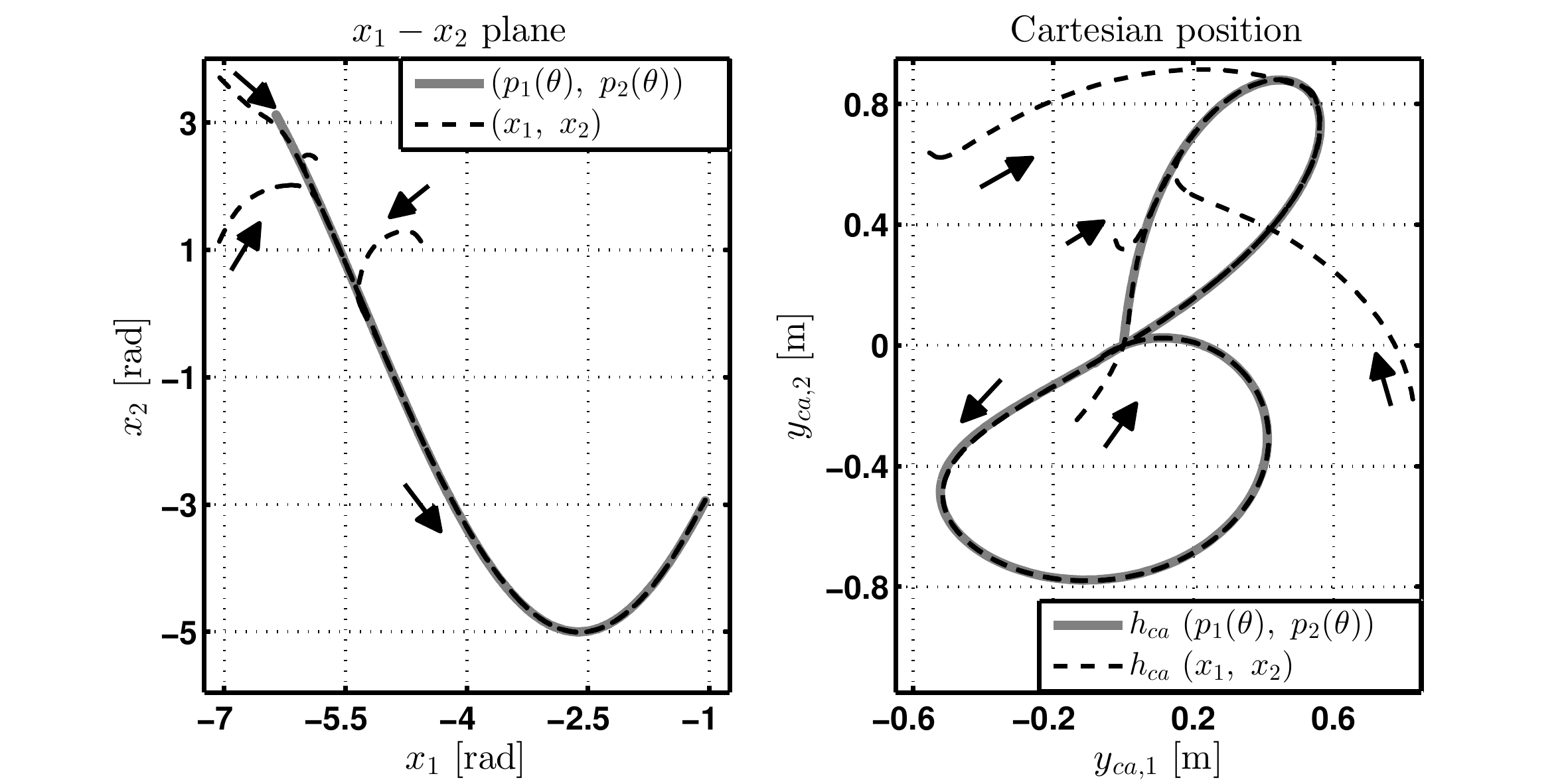}}
\caption{Simulation results for 2-DoF robot.} \label{fig:robot_sim}
\end{figure*}

%\begin{figure*}
%\centering   
%%\subfloat[Simulated closed-loop trajectories for 2-DoF robot.]{\label{fig:robot_1}\includegraphics[width=.99\textwidth]{./figures/states.pdf}}\\
% \includegraphics[width=0.66\textwidth]{./figures/outputs.pdf}
% \caption{Path convergence in joint space (left) and Cartesian output space (right). \label{fig:robot_2}}
%%\caption{Simulation results.}
%%\label{fig:sim_results}
%\end{figure*}

Here $x_1 = (q_1, q_2) \in \mbb{R}^2$ is the vector of joint angles, $x_2 = (\dot q_1, \dot q_2) \in \mbb{R}^2$ is the vector of joint velocities.
$B: \mbb{R}^{2} \to \mbb{R}^{2 \times 2}$ and $C: \mbb{R}^{4} \to \mbb{R}^{2 \times 2}$ describe the dependence of the 
inertia on the joint angles and the dependence of centrifugal and Coriolis forces on joint angles and velocities, respectively. The function 
$g: \mbb{R}^{2} \to \mbb{R}^{2}$ models the effect of gravity. The model details are provided in Appendix \ref{app:term_region}.

The output $y=x_1$ denotes the space of joint angles, the output $y_{ca}  = h_{ca}(x_1)$ is the position of the robot tool in Cartesian coordinates. 
The inputs $u= (u_1, u_2)^T$ are the torques applied to each joint. We consider box constraints on states and inputs 
\begin{subequations} \label{eq:robot_con}
\begin{align} 
 \mcl{U} &= \left\{u \in \mbb{R}^2 ~|~ \|u\|_\infty \leq \bar u\right\} \\
\mcl{X} &= \left\{x = (x_1, x_2) \in \mbb{R}^4 ~|~ \|x_2\|_\infty = \|(\dot q_1, \dot q_2) \|_\infty \leq \bar{\dot q}\right\} 
\end{align}
\end{subequations}
whereby  $\bar u =  4000$ Nm and $\bar{\dot q} = \frac{3}{2}\pi$ rad/s.

The considered path-following task is described in the joint space. 
The path is specified via the parametrization $p: [\theta_0, \theta_1] \to \Rny$
\begin{equation} \label{eq:robot_path}
%\mathcal{P} = \left\{y \in \mathbb{R}^2 ~|~ [\theta_0, 0]
%	\mapsto  
	p(\theta) = \begin{pmatrix} \theta - \frac{\pi}{3}, ~
		\omega_1 \sin(\omega_2(\theta - \frac{\pi}{3})) \end{pmatrix}^T		
		%\right\},
\end{equation}
where $\theta_0 = -5.3, \theta_1 =0, \omega_1 = 5, \omega_2 = 0.6$.
% Note that without loss of generality  we set $\theta_1 =0$.

\subsection{Simulation Results}

The cost function for the MPFC scheme is chosen according to Remark \ref{rem:cost}, i.e., we penalize the path error $e$ and it stime derivative $\dot e$. 
\begin{align} \label{eq:cost_robot}
 F(e, \dot e, \theta, u, v) & = \left \|(e, ~\dot e, ~ \theta)^T\right\|^2_Q +\left\|(u - \tilde u, ~v)^T\right\|^2_R  %, \quad
 % Q  & = 5\cdot\diag(10^4, 10^4, 1), \quad 
  % R = 10^{-6}\cdot I^{3,3}. \nonumber %\diag(10^{-6}, ~10^{-6}, ~10^{-4}).
\end{align}
whereby $Q = \diag(10^5, 10^5, 10, 10, 5)$ and $R = \diag(10^{-3}, 10^{-3},  10^{-4})$. This way we also satisfy Assumption \ref{ass:F}.
The offset $\tilde u = (263.0, -262.5)^T = g(p(0))$ corresponds to the torque required to keep the robot at the final path point $p(0)$.
%\footnote{In the view of  the transverse normal forms of Lemma \ref{lem:TNF} one could as well penalize not only the path-following error $e$ but also its time derivatives. This is, for instance, done in \cite{epfl:faulwasser13b, Boeck13a, ifat:faulwasser12a}.}
In Appendix \ref{app:term_region} we show how to derive the following terminal region for the augmented system \eqref{eq:robot_aug}
\begin{subequations} \label{eq:robot_E}
\begin{align} 
  \mcl{E} &= \left\{(x,z) \in \mbb{R}^6 \,|\,  (\xi, \eta) = \Phi(x,z) , ~\xi^T P_\xi  \xi
  \leq 3.13, ~ \eta \in \mcl{E}_\eta \right\}% \subset \mcl{X}\times\mcl{Z}
\label{eq:robot_E_x}\\
 \mcl{E}_\eta &= \big\{z \in \mbb{R}^2 \,|\, %-5.3\leq z_1\leq 0, \,0\leq z_2\leq 0.4, 
%&\hspace*{3cm} 
z_1 \in [-5.3, 0], z_2 \in [0, 0.4],\,
n^Tz \leq 0
%0.78z_1+0.63z_2 \leq 0
\big\} 
\end{align}
\end{subequations}
with $n = (0.78, 0.63)^T$.
The virtual states $z$ are restricted to a polyhedral terminal region $\mcl{E}_\eta$ which is sketched in Figure \ref{fig:E_eta} in Appendix \ref{app:term_region}. Additionally, in \eqref{eq:robot_E_x} the directions of the augmented state $(x,z)$ that are transverse to the path manifold  are restricted
% via a coordinate change $\Phi$ from \eqref{eq:Phi} 
to an ellipsoidal terminal region, whereby $P_\xi$ is from \eqref{eq:P_xi}.
Furthermore, we show in Appendix \ref{app:term_region} that $\mcl{E}$ \eqref{eq:robot_E} satisfies the conditions of Lemma \ref{lem:ExistE}.
Thus, according to Proposition \ref{prop:E(t)}  we consider the trivial terminal penalty $E(t,x,z) =0$.

\textcolor{black}{
The simulations are performed with the following parameters:
The virtual input $v$ is restricted to $\mcl{V} = [-50, 50]$. The prediction horizon is set to $T = 0.75s$, the sampling time is $\delta = 0.005s$ and OCP \eqref{eq:OCP} is solved repeatedly 
with a direct multiple shooting implementation using 20 shooting intervals \cite{Houska11a}. }

Figure \ref{fig:robot_1} presents simulations results for the initial condition 
$x(0) = (-5.86, 2.43, 0, 0)^T$, $z(0) = (-5.3, 0)^T$.
The upper left side shows the time evolution of the joints $x_1(t) = (q_1(t), q_2(t))$ in black color and the reference $p(z_1(t))$ in gray color. The joint positions converge rapidly to the reference. The upper right side depicts the corresponding joint velocities and their constraints. In the lower right side  the virtual states $z_1 = \eta_1, z_2 = \eta_2$ and the virtual input $v$ are plotted.
One can observe that the path parameter moves forward to the end of the path at $\theta=z_1 =0$. Also note that the MPFC scheme uses the virtual input $v$ to adjust the speed along the reference.
The input torques are shown in the lower left side of Figure \ref{fig:robot_1}. Both inputs satisfy the constraints.

In Figure \ref{fig:robot_2}
 the path convergence for different initial conditions is depicted. On the left side the
 plane of joint angles $x_1 = (q_1, q_2)$ is plotted. The black arrows indicate the direction of movement of the robot.
On the right side it is shown how the solutions for different initial conditions converge to the image of the path in the Cartesian output space defined via \eqref{eq:robot_cart_output}. One can see that the proposed MPFC scheme ensures path convergence for a range of initial conditions. 
Finally, we conclude that the conditions of Theorem \ref{thm:oMPFC} can be used to design predictive path-following controllers.  In presence of constraints on states and inputs the MPFC scheme enforces path convergence and convergence on the path.
\section{Conclusions}
This paper has presented a predictive control scheme for constrained path-following problems with and without velocity assignment that guarantees convergence subject to sufficient convergence conditions based on terminal regions and end penalties. In contrast to geometric or backstepping approaches to path following the proposed  model predictive path-following control scheme allows to handle constraints on states and inputs as well as nonlinear dynamics and reference paths. Furthermore, we have established structural insights into path-following problems via transverse normal forms, which allow simplified computation of terminal regions and end penalties.

\appendices
\section{Computation of a Terminal Region for the Example} \label{app:term_region}

\subsection{Model Details for the Robot Example} \label{app:model_data}
The terms $B, C, g, h_{ca}$ of \eqref{eq:robot} are as follows
\begin{subequations}
\begin{align}
B(q)\phantom{, \dot q} &= \begin{pmatrix}b_1+ b_2\cos(q_2) & b_3 + b_4cos(q_2) \\ b_3 + b_4cos(q_2) & b_5 \end{pmatrix} \label{eq:robot_B} \\
C(q, \dot q) &= -c_1\sin(q_2)\begin{pmatrix}\phantom{-}\dot q_1 & \dot q_1 + \dot q_2 \\ -\dot q_1 &0 \end{pmatrix}  \label{eq:robot_C}
%\end{align}
\\
%\begin{align}
%\vspace*{-1ex}
g(q)\phantom{, \dot q} &= \begin{pmatrix}g_1\cos(q_1) + g_2 \cos(q_1+q_2), & g_2\cos(q_1+q_2) \end{pmatrix}^T   \label{eq:robot_g}\\
h_{ca}(q)& = \begin{pmatrix}l_1\cos(q_1) + l_2\cos(q_1+q_2)\\ l_1\sin(q_1) + l_2\sin(q_1+q_2) \label{eq:robot_cart_output}\end{pmatrix}.
\end{align}
\end{subequations}
The system parameters are listed in Table \ref{tab:robot}, cf. \cite{Siciliano09}.

\begin{table}[b]
 \begin{center}
\caption{Robot parameters \cite{Siciliano09}. \label{tab:robot}}
 \begin{tabular}{|l r l || l r l |} \hline 
  $b_1\phantom{\big(}$	&  $\phantom{-}200.0$ & $ [$kg m$^{2}/$rad$]$  & $b_{2}\phantom{\big(}$ 	&  $\phantom{-}50.0$ & $ [$kg m$^{2}/$rad$]$ \\
  $b_3$	&  $\phantom{-0}23.5 $&$  [$kg m$^{2}/$rad$]$ & $b_{4}$ 	&  $\phantom{-}25.0$ & $ [$kg m$^{2}/$rad$]$   \\
  $b_5$	&  $\phantom{-}122.5 $&$ [$kg m$^{2}/$rad$]$  & $c_{1}$ 	&  $-25.0 $&$ [$Nms$^{-2}]$   \\
  $g_1$	&  $\phantom{-}784.8 $&$ [$Nm$]$& $g_{2}$ 	&  $\phantom{-}245.3 $&$ [$Nm$]$   \\
  $l_1$ &  $\phantom{-00}0.5 $&$ [$m$]$& $l_{2}$ 	&  $\phantom{-00}0.5 $&$ [$m$]$ \\
\hline
  \end{tabular}
 \end{center}
 \end{table}

\subsection{Problem Description in Transverse Normal Form}
It is easy to see that \eqref{eq:robot} has a global vector relative degree $r = (2,2)^T$ with respect to the output $y =x_1$.
Thus, we use as path parameter dynamics an integrator chain of length two and
obtain the augmented system description 
\begin{subequations} \label{eq:robot_aug}
\begin{align}
\begin{pmatrix} 
\dot x_1 \\ \dot x_2 \\ \dot z
\end{pmatrix}
&=
\begin{pmatrix}
x_2 \\
B^{-1}(x_1)\left( u - C(x_1, x_2)x_2 - g(x_1) \right) \\
\tilde{{I}}^{2}z + {E}^2v
\end{pmatrix} \\
e & = x_1 - p(z_1) \\
\theta & = z_1.
\end{align}
\end{subequations}
We want to map these augmented dynamics into a transverse normal form.
Following along the lines of the proof of Lemma \ref{lem:TNF} we obtain
the coordinate transformation $\Phi:~ \mbb{R}^4\times\mbb{R}^2 \to \mbb{R}^4\times\mbb{R}^2$ and its inverse $\Phi^{-1}:~\mbb{R}^4\times\mbb{R}^2 \to \mbb{R}^4\times\mbb{R}^2$
\begin{subequations} \label{eq:Phi}
\begin{align} 
 \Phi\phantom{^{-1}}: \hspace*{0.5mm} \xi_1 &=  x_1 - p(z_1) , \quad \xi_2 =   x_2 - \frac{\partial p}{\partial z_1} z_2, \quad  \eta = z  \\
\Phi^{-1}:  x_1 &=  \xi_1 + p(\eta_1) , \quad x_2 =   \xi_2 + \frac{\partial p}{\partial \eta_1} \eta_2, \quad  z = \eta.
\end{align}
\end{subequations}
Observe that to simplify the later derivations,  we have chosen a different ordering of the $\xi$-variables compared to \eqref{eq:sys_transverse}. Furthermore, note that only the virtual states $z$ appear in $\eta$. This is due to the fact that \eqref{eq:robot_dyn} has a state dimension of $n_x =4$ and the 
vector relative degree \eqref{eq:robot_dyn} $r = (2,2)^T$ with respect to 
$y =x_1$.  Thus the robot dynamics \eqref{eq:robot_dyn} do not have internal dynamics with respect to \eqref{eq:robot_out}.

Due to its simple structure and the assumptions on the path parametrization $p(\theta)$ it is easy to see that $\Phi:~ \mbb{R}^4\times\mbb{R}^2 \to \mbb{R}^4\times\mbb{R}^2$ is a global diffeomorphism.
Using $\Phi$ it is straightforward to rewrite the augmented robot dynamics \eqref{eq:robot_aug} into the transverse normal form.
We obtain
\begin{subequations} \label{eq:robot_TNF}
\begin{align}
\begin{pmatrix} 
\dot \xi_1 \\ \dot \xi_2 \\ \dot\eta
\end{pmatrix}
&=
\begin{pmatrix}
\xi_2 \\ 
\alpha(\xi,\eta, u, v) \\
\tilde{{I}}^{2} \eta + {E}^2v
\end{pmatrix} \\
  e & = \xi_1 - p(\eta_1) \\
 \theta & = \eta_1,
\end{align}
\end{subequations}
whereby the vector field $\alpha: \mbb{R}^4\times\mbb{R}^2\times\mbb{R}^2\times\mbb{R} \to \mbb{R}^2$ is
\begin{multline} \label{eq:alpha_robot_tnf}
\alpha(\xi,\eta, u, v) = 
 B^{-1}(\xi_1, \eta_1)\Bigg( u - C(\xi, \eta)\left(\xi_2 - \frac{\partial p}{\partial \eta_1} \eta_2 \right) \\ - g(\xi_1, \eta_1) \Bigg) 
- \frac{\partial^2 p}{\partial \eta_1^2} (\eta_2)^2 - \frac{\partial p}{\partial \eta_1} v.
\end{multline}

\subsection{Design of a Terminal Region}
We design a terminal control laws for the augmented dynamics in transverse normal form \eqref{eq:robot_TNF}.
Note that in \eqref{eq:robot_aug} as well as in \eqref{eq:robot_TNF} the dynamics of $z = \eta$ are not influenced by the other states. Thus we first design a terminal control and terminal region for the $\eta$-dynamics and subsequently consider the transverse dynamics. 
Recall that the path parameter dynamics  $\dot\eta = \tilde{{I}}^{2} \eta + {E}^2 v$ are simply a double integrator and the state constraint $\mcl{Z}$  \eqref{eq:Z} is a polytope 
$ \mcl{Z} = \{\eta\in\mbb{R}^2 ~|~ \eta_1 \in [\theta_0, 0], ~\eta_2 \geq 0\}$ 
with $\eta =z$. A sketch of the state constraint $\mcl{Z}$ is shown in Figure \ref{fig:E_eta}.  Part ii) of Problem \ref{prob:path} requires that  a terminal control law achieves $\lim_{t\to\infty}\eta(t) = \lim_{t\to \infty} z(t) = (0, 0)$.
In other words, the path parameter state $\eta$ should converge to the origin and the constraint $\eta(t) \in \mcl{Z}$ implies that $\theta =\eta_1 \in [\theta_0, 0]$ and $\dot\theta=\eta_2 \geq
 0$. Since the origin is contained in the boundary of the $\mcl{Z}$, ellipsoidal terminal regions for the $\eta$ part of  \eqref{eq:robot_TNF} would shrink to a point.  Thus we aim on constructing a 
polytopic terminal region, which is rendered positively invariant by a linear feedback  $v_\mcl{E} = K\eta$.

The coefficients of  $K_\eta$ should satisfy $k_1, k_2 < 0$ in order enforce asymptotic convergence to the origin and $~k_2^2 > -4k_1$ to avoid oscillations. %(due to conjugate complex pair of eigenvalues of $\tilde{\mbf{I}}^{2} +\mbf{E}^2 K_\eta$). 
Furthermore, it is easy to verify that the eigenspaces of  $\tilde{{I}}^{2} +{E}^2 K_\eta$ corresponding to such a choice for $K_\eta$ lie in the $2^{nd}$ and $4^{th}$ quadrant of the $\eta_1-\eta_2$ phase plane. Exemplarily this is depicted by the blue lines in Figure \ref{fig:E_eta}.
We use 
\begin{multline} \label{eq:kz}
 v_\mcl{E} = K_\eta \eta, \quad K_\eta = (k_1, k_2), \\~k_1, k_2 <0,  ~k_2^2 > -4k_1, ~k_2 \leq -k_1\theta_0\left(\bar{\dot\theta}\right)^{-1}< 0
\end{multline}
as a terminal feedback for $\eta$. The additional condition $k_2 \leq -k_1\theta_0/\bar{\dot\theta}$ 
 ensures that the initial velocity vector of any closed-loop solution starting on the line $\eta = \left(\alpha, \bar{\dot\theta}\right), \alpha \in [\theta_0, 0]$ points towards the $\eta_1$-axis.
Furthermore,  it is easy to verify that for such a choice of $K_\eta$ all solutions starting somewhere in the interval $[\theta_0, 0]$ on the negative $\eta_1$ axis converge to the origin with $\dot \theta = \eta_2 >0$. Additionally, we have for any choice of $K_\eta$ that the solutions starting on the positive $\dot\theta=\eta_2$ axis will leave the state constraint set $\mcl{Z}$.
Based on these considerations we choose the terminal region for the $\eta$-dynamics as 
\begin{equation} \label{eq:E_eta}
 \mcl{E}_\eta := \left\{\eta \in \mbb{R}^2 ~|~ \eta_1 \in [\theta_0, 0], ~ \eta_2 \in \left[0, \bar{\dot\theta}\right], ~ n_1^T\eta \leq 0\right\}.
\end{equation}
 This terminal region is sketched in green color in Figure \ref{fig:E_eta}. 
Here, $n_1$ is the normal vector corresponding to the \textit{upper} eigenspace of $\tilde{{I}} ^{2}+E^2 K_\eta$.
The verification of  positive invariance of $\mcl{E}_\eta$ with respect to $\dot \eta = (\tilde{{I}} ^{2}+{E}^2 K_\eta)\eta$ follows directly from the considerations above. 
The boundary $0\leq  \eta_2 \leq \bar{\dot\theta}$
is introduced to $\mcl{E}_\eta$ in order to simplify the design of a terminal region for the $\xi$-dynamics. 

 \begin{figure}[t]
	\begin{center}
     	\includegraphics[width =.3\textwidth]{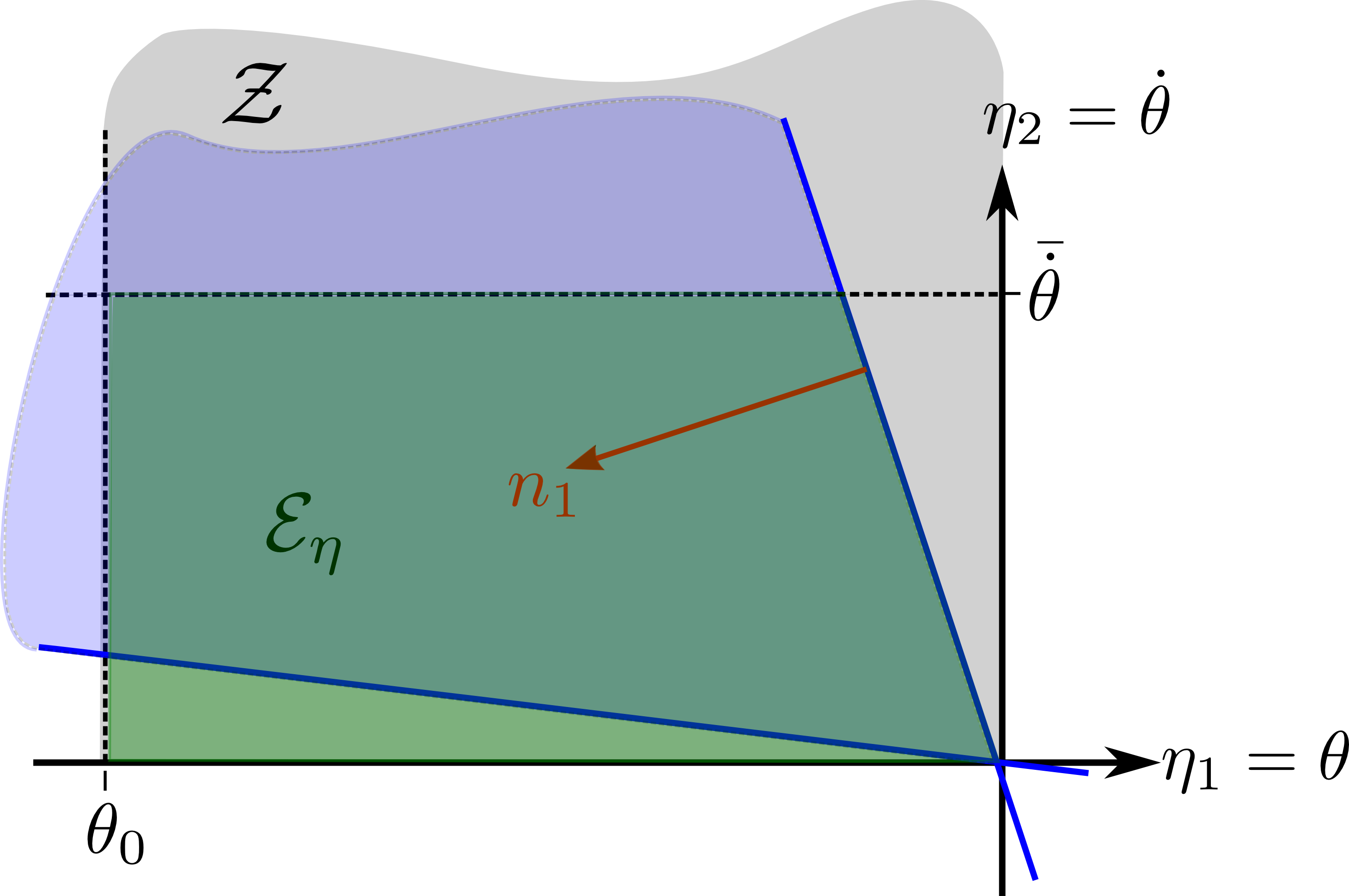}
	\caption{Terminal region for the path parameter dynamics.  \label{fig:E_eta}}
	\end{center}
 \end{figure}

We proceed with the design of a suitable feedback for the $\xi$-part of the transverse dynamics \eqref{eq:robot_TNF}.
As a terminal feedback we use  
\begin{multline} \label{eq:u_E_x}
 u_\mcl{E}(\xi, \eta) =C(\xi, \eta)\left(\xi_2-\frac{\partial p}{\partial \eta_1}\eta_2\right) + g(\xi_1, \eta_1) \\
+ B(\xi_1, \eta_1)\left(K_\xi\xi + \ddot p(\eta_1(t))  \right).
\end{multline}
It is easy see to that this feedback achieves global exact static feedback linearization of \eqref{eq:robot_TNF}. More precisely it achieves global transverse feedback linearization, cf. \cite{Banaszuk95a, Nielsen08a}.
The term $K_\xi\xi$ will be used to stabilize the path manifold. 
The $\ddot p(\eta_1(t)$ part can be understood as a feedforward control. Using this feedback the $\xi$-part of \eqref{eq:robot_TNF}
is governed by $\dot \xi = (A_\xi + B_\xi K_\xi)\xi$.
W.l.o.g. we assume that we have designed a stabilizing
gain matrix $K_\xi$ and that 
\begin{equation} \label{eq:V_xi}
V(\xi) = \xi^TP_\xi\xi, \quad P_\xi >0
\end{equation}
 is a corresponding Lyapunov function.

Now, we are ready to derive a terminal region $\mcl{E}_\xi\subset \mbb{R}^4$ for the transverse part of \eqref{eq:robot_TNF}.  
The main idea is to bound the norm of the feedback \eqref{eq:u_E_x} from above and to obtain the terminal region $\mcl{E}_\xi\subset \mbb{R}^4$ as a level set of $V(\xi)$. 
Due their structure the terms $B: \mbb{R}^2 \to \mbb{R}^{2\times2}, C:\mbb{R}^2 \times \mbb{R}^2 \to \mbb{R}^{2\times2}$ and $g:\mbb{R}^2 \to \mbb{R}^2$ from (\ref{eq:robot_B}-f) can be bounded from above by constants
\begin{equation*}
 \forall x \in \mcl{X}: \quad\|B(x_1)\| \leq \bar B, \quad  \|C(x_1, x_2)\| \leq \bar C, \quad \|g(x_1)\| \leq \bar g. 
\end{equation*}
These bounds also hold in in $(\xi, \eta)$ coordinates. 
To bound $\|\ddot p(z_1(t))\|$ from above we restrict ourselves to the set $\mcl{E}_\eta$ from \eqref{eq:E_eta}. 
Since $z = \eta$,  $p(\theta) \in \mcl{C}^2$ and $\mcl{E}_\eta$ is compact, we obtain 
\begin{multline*}
 \forall \eta \in \mcl{E}_\eta: \quad \left\|\ddot p(\eta_1(t))\right\| \leq  \left\| \frac{\partial^2 p}{\partial \eta_1^2}\eta_2^2 + \frac{\partial p}{\partial \eta_1}v\right\| \\
\leq \left\| \frac{\partial^2 p}{\partial \eta_1^2}\right\|\left(\bar{\dot\theta}\right)^2 + \left\|\frac{\partial p}{\partial \eta_1}\right\|
 \left\|k_1\theta_0 + k_2\bar{\dot\theta}\right\| =: \bar{\ddot p}.
\end{multline*}
Here, we have used that in the set $\mcl{E}_\eta$ the $\eta$-dynamics are controlled via  $v = K_\eta \eta$.
To simplify the further considerations we work with tightened constraints  $\bar{\mcl{U}} \subset \mcl{U}$, $\bar{\mcl{X}} \subset \mcl{X}$
%\vspace*{-0.5cm}
\begin{subequations}\label{eq:robot_con_tight}
\begin{align} 
 \bar{\mcl{U}} &= \left\{u \in \mbb{R}^2 ~|~ \|u\| \leq \bar u\right\} \\
\bar{\mcl{X}} &= \left\{x = (x_1, x_2) \in \mbb{R}^4 ~|~ \|x_2\| = \|(\dot q_1, \dot q_2) \|\leq \bar{\dot q}\right\}
\end{align}
\end{subequations}
where in comparison to \eqref{eq:robot_con} the $2$-norm is used instead of
$\|\cdot\|_\infty$.

Next, we apply the bounds derived before to the feedback $u_\mcl{E}$ from \eqref{eq:u_E_x}. This yields
\[
 \forall (\xi,\eta)^T \in \Phi\left( \bar{\mcl{X}}\times \mcl{E}_\eta\right):  \quad  \| u_\mcl{E}(\xi,\eta)\| \leq \bar C\bar{\dot q} +\bar g +  \bar B\left(\bar{\ddot p} + \|K_\xi \xi\| \right).
\]
We enforce that inside the terminal region to be determined, $\mcl{E}_\xi\times\mcl{E}_\eta$, the tightened input constraint $u_\mcl{E}(\xi, \eta) \in  \bar{\mcl{U}}$ is satisfied. This is the case if 
 \[
   \forall (\xi,\eta)^T \in \mcl{E}_\xi\times\mcl{E}_\eta: \quad  \bar C\bar{\dot q} +\bar g +  \bar B\left(\bar{\ddot p} + \|K_\xi \xi\| \right) \leq \bar u.
 \]
Solving the last equation for $\|\xi\|$ yields for all $(\xi,\eta)^T \in \Phi\left( \bar{\mcl{X}}\times \mcl{E}_\eta\right)$:
\begin{multline} \label{eq:u_bnd_robot}
% \forall (\xi,\eta)^T \in \Phi\left( \bar{\mcl{X}}\times \mcl{E}_\eta\right): \quad  
\|\xi\| \leq \dfrac{\bar u - \bar C\bar{\dot q} -\bar g - 
\bar B \bar{\ddot{p}}}{\bar B \|K_\xi\|} 
\quad \Rightarrow \quad u_\mcl{E}(\xi, \eta) \in\bar{\mcl{U}} \subset \mcl{U}.
\end{multline}
Subsequently,  we derive $\mcl{E}_\xi$ as a suitable level set of the Lyapunov function $V(\xi)$ from \eqref{eq:V_xi}. In general, the level set is
\[
 \mcl{E}_\xi := \left\{ \xi \in \mbb{R}^4 ~|~ \xi^TP_\xi \xi \leq \gamma^2 \right\}.
\]
The constant $\gamma$ can  be computed as follows
\begin{subequations} \label{eq:compute_c}
\begin{equation}
 \underset{\gamma > 0}\maximize ~\gamma
\end{equation}
subject to
\begin{align}
 \forall \xi \in \mcl{E}_\xi: \quad \phantom{_2} \|\xi\| &\leq \dfrac{\bar u - \bar C\bar{\dot q} -\bar g - \bar B \bar{\ddot{p}}}{\bar B \|K_\xi\|} \label{eq:u_lev_set_con}\\
 \forall  \xi \in \mcl{E}_\xi: \quad \left\| \xi_2 \right\| &\leq \bar{\dot q}- \bar{\dot p}.\label{eq:q_lev_set_con}
\end{align}
\end{subequations}
Here, $\bar{\dot p}$ is a bound on $\dot p(\eta_1(t))$ that can be obtained for $\eta \in\mcl{E}_\eta$ in a similar fashion as $\bar{\ddot p}$. 
Given $K_\xi$ and $P_\xi$ this is a simplified version of the (convex) problem to compute a maximum volume ellipsoid contained in a convex set, cf. \cite{Boyd04}.
If $\bar{\dot q}-\bar{\dot p}$ and the constant on the right side of \eqref{eq:u_lev_set_con} are positive, problem \eqref{eq:compute_c} has a solution $\gamma^\star >0$. This is the case if the input bound $\bar u$ and the bound $\bar{\dot q}$ are sufficiently large.

 We use the model data from Table \ref{tab:robot} and the path \eqref{eq:robot_path} to compute numerically the sets $\mcl{E}_\eta$ and $\mcl{E}_\xi$. The bound on $\eta_2$ is set to $\bar{\dot \theta} = 0.4$, and
the feedback matrix for the $\eta$-dynamics is $K_\eta = (-0.1, -1.33)$. 
This leads to the terminal constraint for $\eta$
%\begin{subequations}
 \begin{equation*}
 \mcl{E_\eta} = \left\{\eta \in \mbb{R}^2 ~|~   \eta_1 \in [-5.3, 0], ~\eta_2 \in [0, 0.4],  (0.78, \,0.63)\eta \leq 0\right\}.
\end{equation*}
The Lyapunov function \eqref{eq:V_xi} and the feedback matrix $K_\xi$ 
%\begin{equation}
%
% V(\xi) = \xi^T\begin{pmatrix}
%                1.73 & 0 & 1 & 0 \\
%		0 & 1.73 & 0 & 1 \\
%		1 & 0 & 1.73 &0 \\
%		0 & 1 & 0 & 1.73
%               \end{pmatrix}
%		\xi, \quad K_\xi = \begin{pmatrix}
%		                    1& 0 & 1.73 & 0 \\ 0 & 1& 0 &1.73
%		                   \end{pmatrix}.
%\end{equation}
are computed via an LQR controller with $Q_\xi ={I}^{4}, R_\xi = {I}^{2}$.
This leads to 
\begin{equation} \label{eq:P_xi}
    P_\xi = \left(\begin{array}{c|c}
		P_1 & P_2 \\ \hline P_2 & P_1
\end{array}\right),  \quad K_\xi = (P_1, P_2) %\quad P_1 = \diag(1.73, ~ 173), 
%\begin{pmatrix}
 %               1.73 & 0 \\
%		0 & 1.73 
 %              \end{pmatrix}, 
%\quad P_2 = I^{2,2}
\end{equation}
with $P_1 = \diag(1.73, ~ 1.73)$ and $P_2 = {I}^2$.
Solving \eqref{eq:compute_c} with these values yields $ \gamma = 1.77$.
Thus the ellipsoidal part of the terminal region is
\begin{equation*} 
 \mcl{E}_\xi = \left\{ \xi \in \mbb{R}^4 ~|~ \xi^T P_\xi \xi \leq 3.13\right\}.
\end{equation*}
Rewriting the terminal constraints in $(x,z)$ coordinates yields
\begin{equation} \label{eq:robot_EII}
  \mcl{E} = \left\{(x,z) \in \mbb{R}^6 ~|~  (\xi, \eta) = \Phi(x,z) , ~\xi^T P_\xi  \xi
  \leq 3.13), ~ \eta \in \mcl{E}_\eta \right\}.% \subset \mcl{X}\times\mcl{Z}.
\end{equation}

\subsection{Derivation of a Terminal Penalty}
It remains to derive an end penalty such that the conditions of Theorem \ref{thm:oMPFC} or Proposition \ref{prop:E(t)} are satisfied.

For the MPFC controller  we use the quadratic cost function $F$ from \eqref{eq:cost_robot}, which can be written in $\xi, \eta$ coordinates as $F(\xi, \eta_1,u,v) = \|(\xi, \eta_1)^T\|_Q^2 + \|(u-\tilde u, v)^T\|_R^2$. 
%\begin{align*}
% F(e,  \theta, u, v) & = 
% \|(e, ~ \theta)\|^2_Q +\|(u - \tilde u, ~v)\|^2_R,\\
%  Q & = \diag(10^5, ~10^5, ~5), ~ R = \diag(10^{-6}, ~10^{-6}, ~10^{-4}).
%\end{align*}
%The offset $\tilde u = (263.0, -262.5)^T = g(p(0))$ corresponds to the torque required to keep the robot at the final path point $p(0)$.
%
It is straightforward to see that  for all $(\xi_0, \eta_0 )^T \in \mcl{E}, \forall t \geq t_0:$
\begin{align*}
\|\xi(t, t_0, \xi_0 | u_\mcl{E}(\cdot))\| &\leq c_\xi(\xi_0, \eta_0) e^{-\alpha_\xi (t-t_0)} \\
\|\eta(t, t_0, \eta_0 | v_\mcl{E}(\cdot))\| &\leq  c_\eta(\xi_0, \eta_0) e^{-\alpha_\eta (t-t_0)}, 
\end{align*}
whereby $ c_\xi(\xi_0, \eta_0)$ and $c_\eta(\xi_0, \eta_0)$ are bounded from above by finite numbers.
In other words, inside the terminal region \eqref{eq:robot_EII} the application of the terminal control law \eqref{eq:u_E_x} leads to exponential convergence of the transverse directions $\xi\in\mbb{R}^4$  and the path parameter state $\eta \in \mbb{R}^2$. Furthermore, it is clear that $v_\mcl{E}(t) = K_\eta\eta(t)$ is also converging exponentially to zero. 
Using these bounds on $\xi(t), \eta(t)$ and $v_\mcl{E}(t)$ we obtain that the solutions driven by the terminal feedback \eqref{eq:u_E_x} satisfy
\begin{multline*}
  \|u_\mcl{E}(\xi(t), \eta(t))-\tilde u\| \leq 
\underbrace{\|C(x_1(t), x_2(t))x_2(t)\|}_{\leq ~ \bar C  c_{x_2} e^{-\alpha_{x_2} (t-t_0)} } \\ 
+\underbrace{\|B(x_1(t))\left(K_x\xi(t) + \ddot p(\eta_1(t))  \right)\|}_{ \leq ~ \bar B\left(\|K_\eta\|c_\eta e^{-\alpha_\eta (t-t_0)} +c_{\ddot p} e^{-\alpha_{\ddot p} (t-t_0)}\right)}  + 
 \underbrace{\| g(x_1(t)) - g(p(0))\|}_{\leq ~c_g e^{-\alpha_g (t-t_0)}}
\end{multline*}
The bound on the first term follows from $x_2 = \xi_2 - \dot p$ and $\|\dot p(t)\| \leq \|\frac{\partial p}{\partial \theta}\| c_\theta e^{-\alpha_\theta (t-t_0)}$. The bound on the second term follows in a similar fashion. The estimate from above on $\|g(x_1) - g(p(0))\|$ is more complicated: Note that $x_1 = \xi + p(\eta_1)$.
For $p: [\theta_0, 0] \to \mbb{R}^2$ from \eqref{eq:robot_path} one can show that exponential convergence of $\eta$ to $0$ implies exponential convergence of $p(\eta_1)$ to $p(0)$. 
Using this we see that for $t \to \infty$ also the state $x_1$ converges exponentially to $p(0)$ since $x_1 = \xi + p(\eta_1)$. Finally, we use that in $g: \mbb{R}^2 \to \mbb{R}^2$ from \eqref{eq:robot_g}
only $\cos$-functions appear, and conclude that $g(x_1) - g(p(0))$ converges exponentially to $0$. Since all the arguments of $F$ converge exponentially to $0$ and $F$ is quadratic, 
we see that the terminal region \eqref{eq:robot_EII} and the terminal controls (\ref{eq:kz}, \ref{eq:u_E_x})  satisfy the assumptions of Lemma \ref{lem:ExistE}. In other words, $\mcl{E} $ from \eqref{eq:robot_EII} and the trivial terminal penalty $E(t,x(t),z(t)) =0$ allow applying Proposition \ref{prop:E(t)}.

\ifCLASSOPTIONcompsoc
  % The Computer Society usually uses the plural form
  \section*{Acknowledgments}
\else
  % regular IEEE prefers the singular form
  \section*{Acknowledgment}
\fi

The authors would like to thank Friedrich von Haeseler from the Otto-von-Guericke University Magdeburg for valuable feedback and discussions on transverse normal forms.

% Can use something like this to put references on a page
% by themselves when using endfloat and the captionsoff option.
\ifCLASSOPTIONcaptionsoff
  \newpage
\fi

% trigger a \newpage just before the given reference
% number - used to balance the columns on the last page
% adjust value as needed - may need to be readjusted if
% the document is modified later
%\IEEEtriggeratref{8}
% The "triggered" command can be changed if desired:
%\IEEEtriggercmd{\enlargethispage{-5in}}

% references section

% can use a bibliography generated by BibTeX as a .bbl file
% BibTeX documentation can be easily obtained at:
% http://www.ctan.org/tex-archive/biblio/bibtex/contrib/doc/
% The IEEEtran BibTeX style support page is at:
% http://www.michaelshell.org/tex/ieeetran/bibtex/
%\bibliographystyle{IEEEtran}
% argument is your BibTeX string definitions and bibliography database(s)
%
\bibliographystyle{IEEEtran}
%\bibliography{D:/Data/EPFL/Literatur/literature_latin1}
\bibliography{literature_latin1}

\end{document}